\newcommand{\ignore}[1]{{}}
\def\p1{\phantom{1}}
\newtheorem{theorem}{Theorem}
\newtheorem{corollary}{Corollary}
\newtheorem{lemma}{Lemma}
\newtheorem{proposition}{Proposition}
\newtheorem{definition}{Definition}
\newcommand{\full}[1]{{}}
\begin{document}

\title{Motivating Time-Inconsistent Agents: A Computational Approach}
\author{Susanne Albers \thanks{Department of Computer Science, Technische Universit\"at M\"unchen, 85748 Garching, 
Germany; {\tt albers@in.tum.de}. Work supported by the German Research Foundation, project Al 464/7-1.}\and 
Dennis Kraft\thanks{Department of Computer Science,Technische Universit\"at M\"unchen, 85748 Garching, Germany. 
{\tt kraftd@in.tum.de} }}
\date{}
\maketitle


\begin{abstract}
In this paper we investigate the computational complexity of motivating time-inconsistent agents to complete long term projects.
We resort to an elegant graph-theoretic model, introduced by Kleinberg and Oren~\cite{KO}, which consists of a task graph $G$ with $n$ vertices, including a source $s$ and target $t$, and an agent that incrementally constructs a path from $s$ to $t$ in order to collect rewards.
The twist is that the agent is present-biased and discounts future costs and rewards by a factor $\beta\in [0,1]$.
Our design objective is to ensure that the agent reaches $t$ i.e.\ completes the project, for as little reward as possible.
Such graphs are called motivating.
We consider two strategies.

First, we place a single reward $r$ at $t$ and try to guide the agent by removing edges from $G$.
We prove that deciding the existence of such motivating subgraphs is NP-complete if $r$ is fixed.
More importantly, we generalize our reduction to a hardness of approximation result for computing the minimum $r$ that admits a motivating subgraph.
In particular, we show that no polynomial-time approximation to within a ratio of $\sqrt{n}/4$ or less is possible, unless ${\rm P}={\rm NP}$.
Furthermore, we develop a $(1+\sqrt{n})$-approximation algorithm and thus settle the approximability of computing motivating subgraphs.

Secondly, we study motivating reward configurations, where non-negative rewards $r(v)$ may be placed on arbitrary vertices $v$ of $G$.
The agent only receives the rewards of visited vertices.
Again we give an NP-completeness result for deciding the existence of a motivating reward configuration within a fixed budget $b$.
This result even holds if $b=0$, which in turn implies that no efficient approximation of a minimum $b$ within a ration grater or equal to $1$ is possible, unless ${\rm P}={\rm NP}$.
\end{abstract}


\section{Introduction}
Motivated by a recent paper of Kleinberg and Oren~\cite{KO}, we study the phenomenon of {\em time-inconsistent behavior\/} from a computer science perspective.
This fundamental problem in behavioral economics has many examples in every day life, including academia.
Consider, for instance, a referee who agrees to evaluate a scientific proposal.
Despite good intentions, the referee gets distracted and never submits a report.
Or consider a student who enrolls in a course.
After successfully completing the first couple of homework assignments the student drops out without earning any credit points.
In general, these situations have a reoccurring pattern.
An agent makes a plan to complete a set of tasks in the future but changes the plan at a later point in time.
Sometimes this is the result of unforeseen circumstances.
However, in many cases the plan is changed or abandoned even if the circumstances are the same as when the plan was made.
This paradox behavior of {\em procrastination\/} and {\em abandonment\/} is well-known in the field of behavioral economics and can have substantial effects on the performance of agents in an economic or social domain, see e.g.~\cite{A,DR1,DR2}.

A sensible explanation for time-inconsistent behavior is that agents assign disproportionally greater value to current cost than to future expenses. 
As an example, consider a simple {\em car wash problem\/} in which an agent, say Alice, is promised extra pocket money for washing her family's car.
Each day Alice can either do the chore or postpone it to the next day.
However, the longer she waits, the dirtier the car gets.
Assume that washing the car on day $i$, with $i\geq 1$, incurs a cost of $i/50$.
The cost of waiting another day is~$0$.
After completion of the task, she will receive a reward of $1$ Euro from the family. 
Alice is present-biased, i.e.\ she perceives current cost according to its true value but discounts future costs and rewards by a factor of $\beta \in [0,1]$. 
On day $i$ she compares the cost of washing the car right away, which is $i/50$, to the perceived cost of washing it on the next day, which is $\beta (i+1)/50$.
Suppose that $\beta = 1/3$. 
Because $i/50>\beta(i+1)/50$, she procrastinates with good intentions of doing the job on the following day.
On day $i=50$ Alice realizes that her perceived cost for washing the car on the next day, or any of the following days, is at least $\beta (50+1)/50$, which exceeds the perceived value of $\beta$.
Thus she abandons the project altogether.

{\bf Previous work:} In the economic literature there exists a considerable body of work on time-inconsistent behavior, cf.\ again~\cite{A,DR1,DR2}. 
We build on work by Kleinberg and Oren~\cite{KO} who propose a graph-theoretic model that elegantly captures the phenomena of procrastination and abandonment as observed in the car wash problem.
We will formally define the model in Section~\ref{sec:model}.
Essentially, it consists of a directed acyclic graph $G$ with $n$ vertices that models a long term project.
An agent, with bias factor $\beta\in[0,1]$, incrementally constructs a path from a designated source $s$ to a target $t$.
The edges of $G$, which represent the individual tasks of the project, are assigned non-negative costs.
The vertices, except for $s$ and $t$, correspond to intermediate states of the project.
When located at a vertex $v$, the agent chooses a path $P$ from $v$ to $t$ that minimizes its {\em perceived cost\/}. 
This means, the agent accounts for the first edge of $P$ by its true cost, whereas all remaining edges are discounted by $\beta$.

Kleinberg and Oren~\cite{KO} first investigating structural properties of $G$ under the assumption that the agent must not abandon the project.
In particular, they characterize task graphs in which the ratio between the total cost of a path traversed by the agent and the minimum cost of a path from $s$ to $t$ is exponential in $n$.
Interestingly, any such graph, after removing the direction of its edges, must contain a {\em $k$-fan\/}, with $k \in \mathcal{O}(n)$, as a minor.
Furthermore, Kleinberg and Oren analyze the number of different paths traversed the agent as $\beta$ varies between~$0$ and $1$.
They show that this number is in $\mathcal{O}(n^2)$.
The later result requires a consistent tie-breaking rule should the agent be indifferent between outgoing edges of the same vertex.

Next, Kleinberg and Oren~\cite{KO} assume that the agent is free to abandon the project and place a reward $r$ at $t$ as to motivate it to finish.
When located at $v$, the agent continues to follow a path that minimizes its perceived cost as long as it does not exceed the perceived value of the reward.
A graph in which the agent always successfully traverses a path from $s$ to $t$ is called {\em motivating\/} .
Kleinberg and Oren~\cite{KO} are interested in finding motivating subgraphs, by removing edges from $G$.
The authors present structural properties that any motivating subgraph with a minimal number of edges must satisfy. 

Finally, Kleinberg and Oren~\cite{KO} point to a number of open computational problems, including the complexity of finding motivating subgraphs.
Moreover, Kleinberg and Oren propose a problem setting in which rewards may be placed at intermediate vertices instead of just $t$.
In this case $G$ may not be pruned.
We call such a construct a {\em reward configuration}.

In an unpublished manuscript Tang et al.~\cite{TT} address some of the open problems. 
They refine the result on the cost ratio, which relates the path traversed by the agent to a cheapest path.
More specifically, they show that any task graph that does not contain a $k$-fan as minor after removing the direction of its edges must have a cost ratio of at most $\beta^{2-k}$.
Hence, for any fixed $k$, the cost ratio is constant in $n$.
Moreover, Tang et al. prove that it is NP-hard to decide if $G$ contains a motivating subgraph for a fixed reward.
Finally, they explore the problem of deciding the existence of a motivating reward configuration within a given budget.
Tang et al. give NP-hardness results for three variations of the problem.
They distinguish between configurations that are restricted to non-negative rewards and configurations that allow for any real-valued rewards.
Furthermore, they consider a setting in which every reward that is laid out on $G$ must also be collected.
In each of the three problem variations, Tang et al. measure the total value of a configuration by the sum of the absolute values of all rewards placed on $G$.

{\bf Our contribution:} In this paper we focus on the complexity and approximability of finding motivating subgraphs and reward configurations.
Our objective is budget efficiency.
Note that we take a design perspective.
In particular, we are not interested in minimizing the total cost experienced by the agent on its walk from $s$ to $t$ but rather the reward necessary to motivate the agent.

As for the first problem, by removing edges from $G$, it is possible to limit the agent's options in each of its steps.
Ideally, this prevents the agent from pursuing costly distractions and thereby reduces the reward required for it to finish the project.
The benefit of choice reduction is a well-known phenomenon in the field of behavioral economics.
It also has a very natural intuition in many real-live projects.
Take for instance the car wash problem.
As we will show in Section~\ref{sec:model}, the removal of edges in the problem's task graph corresponds to the introduction of deadlines.

The second problem takes a slightly less restrictive approach and allows the placement of intermediate rewards at arbitrary vertices of $G$.
Again this is meant to prevent the agent from pursuing distractions and encourage it to complete the project. 
We examine a version of the problem that, in our view, is the most sensible one. 
First, only non-negative rewards may be laid out.
This assumption is reasonable as it could be hard to convince an agent to pursue projects in which it has to make payments.
Furthermore, it is not clear how to account for such payments in the budget.
Secondly, the cost of a reward configuration is only measured by the sum of the rewards that are placed at vertice visited by the agent on its walk from $s$ to $t$.
This setting is a fundamentally different from the ones analyzed by Tang et al. as it may lead to configurations in which the agent is motivated by rewards that are never claimed.
Such configurations are also called {\em exploitative}.
We give an example in Section~\ref{sec:model}.

In Section~\ref{sec:comp} we settle the complexity of finding a motivating subgraph for a fixed $r$. 
We first observe that the problem is polynomially solvable if $\beta=0$ or $\beta=1$. 
We then prove that, for general $\beta\in(0,1)$, it is NP-complete to decide the existence of a motivating subgraph. 
In their paper~\cite{TT}, Tang et al. showed NP-hardness via a reduction from 3-SAT.
In contrast, we present a different reduction via $k$ DISJOINT CONNECTING PATHS~\cite{GJ}.
We believe that this reduction is slightly simpler. 
More importantly, we are able to generalize the reduction and show a hardness of approximation result in the following section.

In Section~\ref{sec:approx} we study the optimization version of the motivating subgraph problem. 
More formally, given a $\beta\in(0,1)$, determine the smallest possible value of $r$ such that $G$ contains a motivating subgraph.
We develop a $(1+\sqrt{n}$)-approximation algorithm that outputs $r$ as well as a corresponding motivating subgraph.
Interestingly, these subgraphs are paths. 
The algorithm is in fact a combination of two strategies, one which computes good solutions for small $\beta$ and one which is effective for large $\beta$. 
Furthermore, the approximation factor of our algorithm is asymptotically tight. 
As the main technical contribution of this paper, we prove that the optimization problem cannot be approximated in polynomial-time to within a ratio of $\sqrt{n}/4$ or less unless ${\rm P}={\rm NP}$.
Thus we resolve the approximability of the problem.

In Section~\ref{sec:ir} we explore the problem of finding reward configuration within a fixed total budget of at most $b$.
We show that the problem can again be solved in polynomial-time if $\beta=0$ or $\beta=1$.
Using a reduction from SET PACKING~\cite{GJ}, we prove that deciding the existence of a motivating reward configuration is NP-complete for general $\beta\in(0,1)$, even if $b=0$. 
This immediately implies that the optimization problem of computing the minimum $b$ that admits a motivating reward configuration cannot be approximated efficiently to within any ratio greater or equal to $1$ unless ${\rm P}={\rm NP}$. 


\section{The formal model}\label{sec:model}
In the following, we present the model by Kleinberg and Oren~\cite{KO}. 
Let $G =(V,E)$ be a directed acyclic graph.
Associated with each edge $(v,w)$ is a non-negative cost $c_G(v,w)$. 
An agent, with a bias factor $\beta\in[0,1]$, has to incrementally construct a path from a source $s$ to a target $t$.
At any vertex $v$ the agent evaluates its {\em lowest perceived cost\/}. 
For this purpose, the agent considers all paths from $v$ to $t$ and accounts for the cost of incident outgoing edges by their actual value, whereas it discounts future edges by $\beta$.
More specifically, let $d_G(w)$ denote the cost of a cheapest path from some vertex $w$ to $t$, considering the original edge costs.
If no path exists, we assume that $d_G(w)=\infty$.
Accordingly, the agent's lowest perceived cost is defined as ${\zeta_G(v)=\min\{c_G(v,w) + \beta d_G(w) \mid (v,w) \in E\}}$ if $v$ has at least one incident outgoing edge.
Otherwise we assume that $\zeta_G(v) = \infty$.
The agent only traverses outgoing edges $(v,w)$ for which ${c_G(v,w) + \beta d_G(w)=\zeta_G(v)}$.
Ties are broken arbitrarily.
Should $G$ be clear from context, we will omit the index and write $c(v,w)$, $d(v)$ and $\zeta(v)$ instead.

In Section~\ref{sec:comp} and~\ref{sec:approx} we will investigate problems in which a single non-negative reward $r$ is placed at $t$.
The agent perceives the value of this reward as $\beta r$ at every vertex different from $t$.
A graph $G$ is {\em motivating\/} if the agent does not abandon the project while constructing a path from $s$ to $t$ in $G$.
More specifically, at any vertex $v$ along the agent's path, it compares $\zeta(v)$ to $\beta r$ and continues moving if $\zeta(v) \leq \beta r$, i.e.\ the reward is sufficiently motivating.
Otherwise, if $\zeta(v) > \beta r$, the agent abandons.
Because ties are broken arbitrarily, there could be more than one path for the agent.
Consequently, $G$ is only considered motivating if the agent abandons the project on non of these paths. 

In Section~\ref{sec:ir} we will generalize Kleinberg and Oren's model to allow the placement of non-negative rewards $r(v)$ at arbitrary vertices $v$.
We call such a placement a {\em reward configuration\/}.
Given a specific reward configuration $r$, let $c_r(v,w) = c(v,w) - r(w)$ be the cost of traversing $(v,w)$ minus the reward collected at $w$ with respect to $r$.
Using $c_r$ as new cost metric, we denote the cost of a cheapest path from $w$ to $t$ as $d_r(w)$.
When located at $v$, the agent considers all paths from $v$ to $t$ and accounts for incident outgoing edges by their actual value, whereas future costs and rewards are discounted by $\beta$.
More specifically, we define the agent's perceived cost as $\zeta_r(v)=\min\{c(v,w) + \beta(d_r(w) - r(w)) \mid (v,w)\in E\}$.
The agent continues moving if $\zeta_r(v) \geq 0$. 
In this case the agent traverses an outgoing edge $(v,w)$ which minimizes its perceived cost, i.e.\ $c(v,w) + \beta(d_r(w) - r(w))=\zeta_r(v)$.
Again ties are broken arbitrarily.
Otherwise, if $\zeta_r(v) > 0$, the agent abandons. 
The agent only collects the rewards that it visits on its path from $s$ to $t$.
We are only interested in the value of the total reward handed out to the agent.
We say $r$ is within some given budget $b$ if the agent does not collect a total reward greater than $b$ on any of these paths. 
 
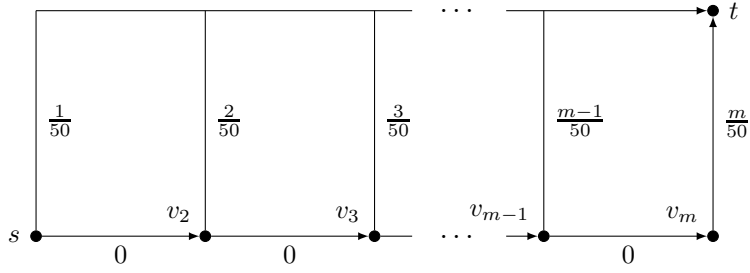
\begin{figure}[t]
\center
	\begin{tikzpicture}[nst/.style={draw,circle,fill=black,minimum size=4pt,inner sep=0pt]}, est/.style={draw,>=latex,->}]
			
		\node[nst] (v1) at (0,0) [label=left:\small{$s$}] {};
		\node[nst] (v2) at (2.25,0) [label=above left:\small{$v_2$}] {};
		\node[nst] (v3) at (4.5,0) [label=above left:\small{$v_3$}] {};
		\node[nst] (v4) at (6.75,0) [label=above left:\small{$v_{m-1}$}] {};
		\node[nst] (v5) at (9,0) [label=above left:\small{$v_{m}$}] {};
		\node[nst] (v6) at (9,3) [label=right:\small{$t$}] {};

		\node at (5.625,0) {$\dots$};
		\node at (5.625,3) {$\dots$};	
		
		\path (v1) edge[est] node [below] {\small{$0$}} (v2)
		(v2) edge[est] node [below] {\small{$0$}} (v3)
		(v3) edge (5,0)
		(6.25,0) edge[est] (v4)
		(v4) edge[est] node [below] {\small{$0$}} (v5)
		(v1) edge node [right] {\small{$\frac{1}{50}$}} (0,3)
		(v2) edge node [right] {\small{$\frac{2}{50}$}} (2.25,3)
		(v3) edge node [right] {\small{$\frac{3}{50}$}} (4.5,3)
		(v4) edge node [right] {\small{$\frac{m-1}{50}$}} (6.75,3)
		(v5) edge[est] node [right] {\small{$\frac{m}{50}$}} (v6)
		(0,3) edge (5,3)
		(6.25,3) edge[est] (v6);
		
	\end{tikzpicture}
	\vspace*{-0.2cm}
\caption{The task graph of the car wash problem.}\label{fig:carwash}
\end{figure}

To illustrate the model, we consider the car wash problem once more. 
Assume that the car has to be washed during the next $m$ days, where $m > 50$. 
The task graph $G$ is depicted in Figure~\ref{fig:carwash}. 
For each day $i$, with $1\leq i \leq m$, there is a vertex $v_i$. 
Let $v_1$ be the source. 
There is an edge $(v_i,t)$ of cost $i/50$ representing the action that Alice washes the car on day $i$. 
In order to keep the drawing simple, the edges $(v_i,t)$ merge in Figure~\ref{fig:carwash}.
Moreover, for every $i<m$ there is an edge $(v_i,v_{i+1})$ of cost~$0$ that represents the postponement of the job from day $i$ to the next day.
Assume for now that Alice is located at some $v_i$, with $i<m$.
Her perceived cost for procrastination is at least $\beta (i+1)/50$.
This lower bound is tight if Alice plans to traverse the edges $(v_i,v_{i+1})$ and $(v_{i+1},t)$.
Alternatively, her perceived cost for using $(v_i,t)$ and washing the car on day $i$ is $i/50$.
Remember that $\beta=1/3$.
It follows that $\zeta(v_i) = \beta (i+1)/50$, which means that Alice always prefers to wash the car on the next day instead of the current day.
Moreover, if $i<50$, then $\zeta(v_i) \leq \beta r$ for the reward of $r=1$ provided by the family at $t$. 
Thus Alice procrastinates and moves along $(v_i,v_{i+1})$. 
Note that her planning is time-inconsistent. 
On day $i$ she intends to follow the path $(v_i,v_{i+1}),(v_{i+1},t)$.
However when located at $v_{i+1}$ she pursues a different strategy. 
Once Alice reaches $v_{50}$, she realizes that $\zeta(v_{50})=\beta (50+1)/50$ exceeds the perceived value of the reward $\beta r$ and abandons. 
Thus $G$ is not motivating.

Next assume that we delete $(v_{16},v_{17})$ from $G$. 
In other words, we remove a procrastination edge and thereby set a deadline at day $i=16$. 
Let $G'$ be the resulting graph. 
When Alice reaches $v_{16}$ she can not procrastinate anymore and her perceived cost is $\zeta_{G'}(v_{16}) = 16/50$ which is less than the perceived value $\beta r =1/3$ of the reward. Hence Alice washes the car and reaches $t$. 
Subgraph $G'$ is motivating.
However, it is interesting to observe, that there is no reward configuration $r$ within a budget less than $(m/50)/\beta$ that is motivating in the original task graph $G$.
This is due to the fact that no matter how much reward is placed at $t$, Alice will always prefer to procrastinate until day $m$, when her cost for washing the car is $m/50$.

To illustrate the strengths of reward configurations, we consider a second scenario.
Suppose that at day $i=50$ Alice's family offers her a new opportunity to earn pocket money.
If she first washes the family's car, which now incurs a cost of $1$, and afterwards also cleans her room, which due to years of neglect incures a cost of $6$, she receives $10$ Euros.
Secretly, the family does not care about Alice cleaning her room.
They only try to trick her into washing the car for free.
We model this project with a new task graph $G$ that consists of a path from $s$ to $t$ via an intermediate vertex $v$ and another path from $v$ to $t$ via an intermediate vertex $w$.
The edge $(s,v)$ corresponds to the job of washing the car and has a cost of $1$, while $(v,w)$ is the job of cleaning Alice's room and has a cost of $6$.
The edges $(v,t)$ and $(w,t)$ are of cost~$0$.
Assuming that $\beta=1/3$, there is a reward configuration $r$ for which the family can motivate Alice to complete the project within a budget of $0$.
Setting $r(w)=10$, Alice traverses $(s,v)$ with a lowest perceived cost of $\zeta_r(s)=-1/3$.
This cost is realized along the edges $(s,v)$, $(v,w)$ and $(w,t)$.
When at $v$, Alice perceives cost of $8/3$ for traversing $(v,w)$ and cleaning her room but $0$ for ending the project right away along $(s,t)$.
Thus she changes her plan and moves to $t$ without collecting a reward.
Interestingly, there is no motivating subgraph of $G$ for a reward less than $3$ if the reward must be placed at $t$.
This suggests that, depending on the structure of the task graph, the performance of our two design strategies may vary drastically.


\section{The complexity of finding motivating subgraphs}\label{sec:comp}

In this section we first observe that if $\beta=0$ or $\beta=1$, then the problem of finding a motivating subgraph can be solved in polynomial-time.
We then prove that the decision problem, which we refer to as MOTIVATING SUBGRAPH (MS), is NP-complete for general $\beta\in(0,1)$. 
Our proof is based on a reduction from $k$~DISJOINT CONNECTING PATHS ($k$-DCP), cf.~\cite{GJ}.
Lynch~\cite{L} showed that $k$-DCP is NP-complete in undirected graphs.
In the Appendix, by adapting Lynch's proof, we show that $k$-DCP is also NP-complete in directed acyclic graphs.

\begin{proposition}\label{prop:01beta}
A motivating sub graph can be found in polynomial time if $\beta=0$ or $\beta=1$. 
\end{proposition}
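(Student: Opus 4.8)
The plan is to handle the two extreme values of $\beta$ separately, in each case reducing the task to a classical shortest‑path (or reachability) computation on $G$.

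For $\beta=1$ the discounting disappears, so the agent's perceived cost at a vertex $v$ from which $t$ is reachable equals the true distance, $\zeta_G(v)=\min\{c_G(v,w)+d_G(w)\mid (v,w)\in E\}=d_G(v)$. Since the agent only traverses edges attaining this minimum, $d_G(\cdot)$ is non‑increasing along every possible trajectory; hence its maximum on the trajectory is $d_G(s)$, and, as $G$ is a DAG, the agent reaches $t$ iff $d_G(s)\le r$. Deleting edges can only increase distances, so a motivating subgraph exists iff $G$ already contains an $s$-$t$ path of cost at most $r$. This is decided by one shortest‑path computation, and if it succeeds one may simply output any minimum‑cost $s$-$t$ path $P$: on a path, $\zeta_P(u)=d_P(u)\le d_P(s)\le r$ at every vertex $u$, so $P$ is motivating.

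For $\beta=0$ the perceived reward $\beta r$ is $0$, so at a vertex $v\neq t$ the agent continues iff $\zeta_G(v)\le 0$, i.e.\ (costs being non‑negative) iff $v$ has an outgoing cost‑$0$ edge leading to a vertex from which $t$ is still reachable; here we adopt the natural reading of the definition of $\zeta$ in which only edges to vertices that can still reach $t$ are taken into account. I claim a motivating subgraph exists iff $G$ contains an $s$-$t$ path all of whose edges cost $0$. If such a path exists, taking it as $G'$ forces the agent along cost‑$0$ edges to $t$, so $G'$ is motivating. Conversely, if $G'$ is motivating, build an agent trajectory greedily from $s$: at any current vertex $v_i\neq t$ the agent may not abandon, so $\zeta_{G'}(v_i)=0$ and it may legally step along a cost‑$0$ edge to some $v_{i+1}$ that still reaches $t$ in $G'$; since $G'$ is a finite DAG this process cannot cycle and cannot get stuck before $t$, so it produces a cost‑$0$ $s$-$t$ path. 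Thus it suffices to search the subgraph of cost‑$0$ edges of $G$ for a path from $s$ to $t$ and output it, which is polynomial.

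The shortest‑path and reachability computations, and checking that the returned path is motivating, are routine. The one delicate point is the converse in the $\beta=0$ case: because ties are broken arbitrarily, ``motivating'' demands that \emph{every} trajectory reach $t$, so one must exclude the possibility that the agent strays into a vertex with no cost‑$0$ continuation toward $t$; this is precisely what the greedy‑trajectory argument, combined with finiteness and acyclicity of $G'$, rules out.
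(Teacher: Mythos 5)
Your proposal is correct and follows essentially the same route as the paper: for $\beta=0$ reduce to reachability in the subgraph of cost-$0$ edges (a cost-$0$ $s$--$t$ path is the motivating subgraph), and for $\beta=1$ reduce to a shortest-path computation with threshold $r$. Your extra care about tie-breaking and about edges leading to vertices that cannot reach $t$ merely fleshes out details the paper leaves implicit; the characterizations and the polynomial-time procedures are the same.
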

\begin{proof}
We start with $\beta=0$. 
In this case a subgraph $G'$ of $G$ is only motivating if at every vertex of $G'$ the agent's perceived cost is~$0$. 
Hence $G$ contains a motivating subgraph if and only if $G$ contains a path from $s$ to $t$ such that all of its edges have cost~$0$. 
Any such path is a motivating subgraph. 
If $\beta=1$, then the agent follows a cheapest path from $s$ to $t$ in any subgraph. 
Hence $G$ contains a motivating subgraph if and only if there exists a path from $s$ to $t$ with a total edge cost of at most $r$.
Should such a path exist, then $G$ is its own motivating subgraph.
Clearly, a motivating subgraph can be found in polynomial-time for both cases $\beta=0$ and $\beta=1$.
\end{proof}

We now formally define the decision problem MS. 

\begin{definition}[MOTIVATING SUBGRAPH]
Given a task graph $G$, a reward $r$ and a bias factor $\beta\in[0,1]$, decide the existence of a motivating subgraph of $G$.
\end{definition}

The following proposition, while being interesting in its own right, implies that MS is contained in the complexity class NP.

\begin{proposition}\label{prop:2}
For any task graph $G$, reward $r$ and bias factor $\beta$, it can be decided in polynomial-time if $G$ is motivating.
\end{proposition}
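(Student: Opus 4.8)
The plan is to exploit the fact that $G$ is a directed acyclic graph, so that the agent's behaviour can be reconstructed by a few sweeps over a topological ordering. First I would fix a topological order of $V$ and compute, for every vertex $w$, the cheapest‑path cost $d_G(w)$ to $t$ by one reverse‑topological sweep: $d_G(t)=0$ and $d_G(v)=\min\{c_G(v,w)+d_G(w)\mid (v,w)\in E\}$, with $d_G(v)=\infty$ when $v$ has no outgoing edge. From these values I would compute $\zeta_G(v)=\min\{c_G(v,w)+\beta d_G(w)\mid (v,w)\in E\}$ for every $v$, and mark an outgoing edge $(v,w)$ as \emph{active} if $c_G(v,w)+\beta d_G(w)=\zeta_G(v)$, i.e.\ if it is an edge the agent may actually traverse when located at $v$. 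All of this takes time linear in $|V|+|E|$.

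Next I would compute, by a forward sweep along the topological order, the set $R\subseteq V$ of vertices that can occur on \emph{some} walk of the agent: put $s\in R$, and whenever $v\in R$ with $v\neq t$ and $\zeta_G(v)\le\beta r$, add to $R$ every head $w$ of an active edge $(v,w)$. The statement I would then prove is that $G$ is motivating if and only if every $v\in R\setminus\{t\}$ satisfies $\zeta_G(v)\le\beta r$; this test clearly runs in polynomial time once $R$ is known.

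For correctness, note that since $G$ is acyclic every walk the agent follows is strictly increasing in the topological order, hence finite; and from any $v\neq t$ with $\zeta_G(v)\le\beta r$ at least one active outgoing edge exists, so such a walk can always be extended until either $t$ is reached or the agent sits at some $v\neq t$ with $\zeta_G(v)>\beta r$ and abandons. Because ties are broken arbitrarily, the set of vertices reachable by the agent on at least one walk is exactly the closure $R$: by induction along the topological order, $v\in R$ iff there is a sequence of active edges from $s$ to $v$ all of whose tails have perceived cost at most $\beta r$, and such a sequence is precisely a partial walk the agent may take under some tie‑breaking. Consequently the agent abandons on some walk iff some $v\in R\setminus\{t\}$ has $\zeta_G(v)>\beta r$; equivalently, $G$ is motivating iff no such vertex exists, which is the condition tested above.

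The only points that need a little care — and the closest thing to an obstacle here — are the universal quantifier over tie‑breaking rules, which is handled by defining $R$ as the union over \emph{all} active edges rather than following one fixed rule, and the degenerate values $d_G(v)=\infty$ or $\zeta_G(v)=\infty$ (dead‑end vertices, or $t$ unreachable from $s$). These are consistent with the model: a vertex with $\zeta_G(v)=\infty$, once in $R\setminus\{t\}$, makes the agent abandon, and if $t$ is unreachable from $s$ then $\zeta_G(s)=\infty>\beta r$, so $G$ is correctly declared not motivating; if $s=t$ the graph is trivially motivating. The overall running time is dominated by the two topological sweeps and the reachability pass, hence polynomial.
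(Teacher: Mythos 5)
Your proposal is correct and follows essentially the same route as the paper's proof: compute the perceived costs $\zeta_G(v)$, keep only the edges minimizing perceived cost, and check that every vertex the agent can reach from $s$ along such edges satisfies $\zeta_G(v)\le\beta r$. Your version merely spells out the topological-order sweeps and the tie-breaking/closure argument in more detail than the paper does.
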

\begin{proof}
We modify $G$ in the following way.
For each vertex $v$, we calculate the lowest perceived cost $\zeta_G(v)$.
Next, we take a copy of $G$, say $G'$, and remove all edges $(v,w)$ from $G'$ such that $\zeta_G(v) < c_G(v,w) + \beta d_G(w).$
In other words, we remove all edges from $G'$ that do not minimize the agents perceived cost.
Because the vertices that can be reached from $s$ in $G'$ are exactly those vertices that are visited by the agent in $G$, $G$ is motivating if and only if $\zeta_G(v) \leq \beta r$ for all vertices that can be reached from $s$ in $G'$.
The latter condition can be checked in polynomial-time by standard graph search algorithms.
\end{proof}

Before we prove NP-hardness of MS, we restate the definition of $k$-DCP as a brief reminder.

\begin{definition}[$k$ DISJOINT CONNECTING PATHS]
Given a directed acyclic graph $H$ and $k$ disjoint vertex pairs $(s_1,t_1), \ldots, (s_k,t_k)$, decide if $H$ contain $k$ mutually vertex-disjoint paths, one connecting every $s_i$ to the corresponding $t_i$.
\end{definition}

Furthermore, we want to introduce to a simple but useful lemma, which lets us set prices along a path of arbitrary length $k$, such that at every vertex, except for the last, the perceived cost of following the path to its end is exactly~$1$.
Such price structures will be a reoccurring feature of the reductions in Theorem~\ref{th:comp} and~\ref{th:approx}.

\begin{lemma}\label{lem:geoseries}
For every positive integer $k$ and bias factor $\beta \in [0,1]$ it holds that 
\[(1 - \beta)^k + \beta\biggl(\sum_{i=0}^{k-1}(1 - \beta)^i\biggr) = 1.\]
\end{lemma}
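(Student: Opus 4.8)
The plan is to establish this purely algebraic identity by induction on $k$, since induction sidesteps the one minor annoyance of the direct geometric-series computation (namely that the closed form $\sum_{i=0}^{k-1}(1-\beta)^i = \frac{1-(1-\beta)^k}{\beta}$ is only valid for $\beta \neq 0$ and needs a separate check at $\beta = 0$). Write $q = 1-\beta$ for brevity, so the claim becomes $q^k + \beta\sum_{i=0}^{k-1}q^i = 1$.

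For the base case $k = 1$ the left-hand side is $q + \beta = (1-\beta) + \beta = 1$, as required. For the inductive step, assume $q^k + \beta\sum_{i=0}^{k-1}q^i = 1$. Then
\[
q^{k+1} + \beta\sum_{i=0}^{k}q^i
= q^{k+1} + \beta q^k + \beta\sum_{i=0}^{k-1}q^i
= q^k\bigl(q + \beta\bigr) + \beta\sum_{i=0}^{k-1}q^i
= q^k + \beta\sum_{i=0}^{k-1}q^i,
\]
using $q + \beta = 1$ in the last step, and this equals $1$ by the induction hypothesis. This completes the induction and hence the proof.

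There is essentially no obstacle here; the only thing to be careful about is not to divide by $\beta$ without justification, which the inductive argument avoids entirely. Alternatively, one could give the one-line direct proof: for $\beta \neq 0$, $\beta\sum_{i=0}^{k-1}(1-\beta)^i = \beta \cdot \frac{1-(1-\beta)^k}{\beta} = 1 - (1-\beta)^k$, so adding $(1-\beta)^k$ yields $1$; and for $\beta = 0$ the expression reduces to $1^k + 0 = 1$.
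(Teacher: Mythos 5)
Your proof is correct. It takes a slightly different route from the paper: the paper proves the identity directly, checking $\beta=0$ separately and, for $\beta>0$, substituting the closed form $\sum_{i=0}^{k-1}(1-\beta)^i = \bigl(1-(1-\beta)^k\bigr)/\beta$ into the left-hand side, which is exactly the ``one-line direct proof'' you mention only as an alternative at the end. Your primary argument is instead an induction on $k$ with $q=1-\beta$, using $q+\beta=1$ in the inductive step; this buys you a uniform treatment of all $\beta\in[0,1]$ with no division by $\beta$ and no case split, at the cost of a few more lines. The paper's computation is shorter but must (and does) handle $\beta=0$ as a separate trivial case, just as you anticipated. Both arguments are complete and valid, and your induction is arguably the cleaner way to present this purely algebraic fact.
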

\begin{proof}
If $\beta$ is equal to $0$, this claim is easy to verify.
However, should $\beta$ be greater than $0$, the geometric series $\sum_{i=0}^{k-1}(1 - \beta)^i$ and can be rewritten as $(1-(1-\beta)^k)/\beta$, which in turn implies that
\[(1 - \beta)^k + \beta\biggl(\sum_{i=0}^{k-1}(1 - \beta)^i\biggr) = (1-\beta)^k + \beta\biggl(\frac{1-(1-k)^k}{\beta}\biggr) = 1.\]
\end{proof}

We are now ready to prove NP-completeness of MS.

\begin{theorem}\label{th:comp}
MS is NP-complete, for any bias factor $\beta\in(0,1)$. 
\end{theorem}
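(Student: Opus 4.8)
The plan is to establish membership in NP and then NP-hardness via a polynomial-time reduction from $k$-DCP on directed acyclic graphs (shown NP-complete in the Appendix). Membership in NP is immediate from Proposition~\ref{prop:2}: given a task graph $G$, a candidate motivating subgraph $G'$ serves as a polynomial-size certificate, since checking whether $G'$ is motivating can be done in polynomial time. So the bulk of the argument is the hardness reduction.

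Given an instance of $k$-DCP — a DAG $H$ with terminal pairs $(s_1,t_1),\dots,(s_k,t_k)$ — I would build a task graph $G$ as follows. Take a fresh source $s$ and target $t$. From $s$, attach a "gadget chain" of length roughly $k$ whose edge costs are chosen via Lemma~\ref{lem:geoseries} (scaled appropriately) so that at each gadget vertex the perceived cost of running the chain to its end is exactly some fixed value, say $1$; the idea is that the agent, to reach $t$ cheaply, is forced to route through the terminal pairs of $H$ in sequence. More concretely, I would chain the pairs: add edges so that the agent must travel $s \rightsquigarrow s_1$, then through some $s_1$–$t_1$ path inside $H$, then from $t_1$ to $s_2$, then through an $s_2$–$t_2$ path, and so on, finally $t_k \to t$. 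Inside $H$ every edge gets cost $0$; the "connector" edges $t_i \to s_{i+1}$ and the entry/exit edges carry the carefully tuned costs from the geometric-series lemma. Crucially, I would also add, at each stage, a tempting but costly "shortcut" or "distraction" edge — for instance a direct edge from $s_i$ toward $t$ of cost just slightly above the reward threshold — whose purpose is: the present-biased agent will be lured onto the distraction (and then abandon) unless the remaining $s_i$–$t_i$ segment can be completed along a path that is vertex-disjoint from everything used so far, because reusing a vertex would force a longer detour and spike the perceived cost past $\beta r$. Then I fix the reward $r$ so that the subgraph is motivating precisely when, for every $i$, a short (hence within budget) $s_i$–$t_i$ path survives, and the disjointness across the pairs is enforced by the fact that a motivating subgraph is an edge-subgraph: edges deleted to tame one pair cannot be reused for another. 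One shows: $G$ has a motivating subgraph for this $r$ and $\beta$ if and only if $H$ has $k$ mutually vertex-disjoint $s_i$–$t_i$ paths.

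The forward direction is the easy one: given the $k$ disjoint paths in $H$, delete from $G$ every edge not on the single $s \rightsquigarrow t$ path threading through them (plus all distraction edges), leaving essentially a path, and verify using Lemma~\ref{lem:geoseries} that the perceived cost at every surviving vertex equals the tuned value $\le \beta r$, so the agent walks to $t$. The reverse direction is the main obstacle: from any motivating subgraph $G'$ one must extract $k$ vertex-disjoint paths in $H$. Here I expect the real work to lie in arguing that (i) the agent in $G'$ cannot avoid entering each gadget region in order, (ii) within each region the agent's chosen path, together with the threshold $\beta r$, forces the length of the used $s_i$–$t_i$ segment to be bounded (so it is a genuine short path, not a wandering walk), and (iii) the segments used for distinct $i$ are vertex-disjoint — which needs care because $G'$ might retain more of $H$ than just $k$ paths, so I would argue that if two used segments shared a vertex then at that shared vertex the perceived cost of the cheaper continuation would drop below what the gadget costs demand, or conversely the forced detour would push some $\zeta(v)$ above $\beta r$ and trigger abandonment on some tie-broken path, contradicting that $G'$ is motivating (recall $G'$ is motivating only if the agent never abandons on \emph{any} tie-break). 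Getting the numeric constants (the scaling of the Lemma~\ref{lem:geoseries} prices, the distraction-edge cost, and $r$) to simultaneously make all of these implications tight, for an arbitrary fixed $\beta\in(0,1)$, is the delicate calculation; the geometric-series lemma is exactly the tool that keeps those numbers consistent along chains of unbounded length.
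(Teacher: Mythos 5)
Your NP-membership step and the choice of $k$-DCP as the source problem match the paper, but your reduction itself has a genuine gap. You propose to chain the terminal pairs so that the agent physically walks $s \rightsquigarrow s_1 \rightsquigarrow t_1 \to s_2 \rightsquigarrow t_2 \to \cdots \rightsquigarrow t_k \to t$, with connector edges $t_i \to s_{i+1}$ added on top of $H$. This can destroy acyclicity of the task graph: $H$ may contain directed paths from $s_{i+1}$ to $t_i$ (and simultaneously from $s_i$ to $t_{i+1}$, etc.), so the connector edges create directed cycles, and no reordering of the pairs is guaranteed to avoid this. Since the model requires a DAG, the construction is not well-defined in general. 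Beyond that, your enforcement mechanism is not worked out: with all $H$-edges at cost $0$ there is no cost signal that detects vertex reuse, so the claim that ``reusing a vertex would force a longer detour and spike the perceived cost'' has no mechanism behind it; and a distraction edge priced ``just slightly above the reward threshold'' is simply never taken by the agent, so it neither lures nor punishes. Your appeal to motivating subgraphs being \emph{edge}-subgraphs also cannot enforce \emph{vertex}-disjointness across pairs. The trick you are missing is that the temptation must be cheap up front and expensive only at its end, so that the present-biased agent is lured in and then abandons.

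The paper's reduction avoids all of this by never sending the agent into $H$ at all. It builds a separate main path $s, v_1, \ldots, v_{k+3}, t$ with geometrically graded costs (via Lemma~\ref{lem:geoseries}), and attaches to each $v_i$ a ``shortcut'' $v_i \to w_i \to s_i \rightsquigarrow t_i \to t$ through the embedded copy of $H$; the shortcut costs are tuned so that each shortcut has target total cost $\theta=(1-\beta)^2+(1-\beta)+1$ exactly when an $s_i$--$t_i$ path of the right kind exists. The shortcuts function purely as planning devices: if every shortcut hits $\theta$, the agent's perceived cost on the main path stays at $1-\varepsilon \le \beta r$ and it reaches $t$; if $\mathcal{I}$ has no solution, some shortcut is degenerate, and then either a too-cheap shortcut lures the agent off the main path (after which the expensive exit edge $(t_i,t)$, of cost exceeding the perceived reward $1$, makes it abandon), or a too-expensive shortcut pushes the perceived cost at the preceding vertex above $1$. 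Vertex-disjointness is extracted not from subgraph edge bookkeeping but from the graded exit costs $i(1-\beta)/(k+1)+1$: a cheapest path from $v_i$ realizing exactly $\theta$ must exit through $t_i$ and must avoid the paths of smaller index, since those offer strictly cheaper exits. You would need to redesign your gadget along these lines (or otherwise guarantee acyclicity and a genuine cost-based disjointness/abandonment mechanism) for the hardness direction to go through.
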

\begin{proof}
By Proposition~\ref{prop:2} we can take any motivating subgraph $G'$ as certificate for a "yes"-instance of MS.
Hence MS is in NP.
In the following we will present a polynomial-time reduction from $k$-DCP to show NP-hardness.
This establishes the theorem. 
Consider an instance ${\cal I}$ of $k$-DCP, consisting of a directed acyclic graph $H$ and $k$ disjoint vertex pairs $(s_1,t_1), \ldots, (s_k,t_k)$. 
We construct an instance ${\cal J}$ of MS that is composed of a task graph $G$, a bias factor $\beta$ and a reward $r$.
The graph $H$ will be embedded into $G$ in such a way that $G$ has a motivating subgraph if and only if H has $k$ disjoint connecting paths. 

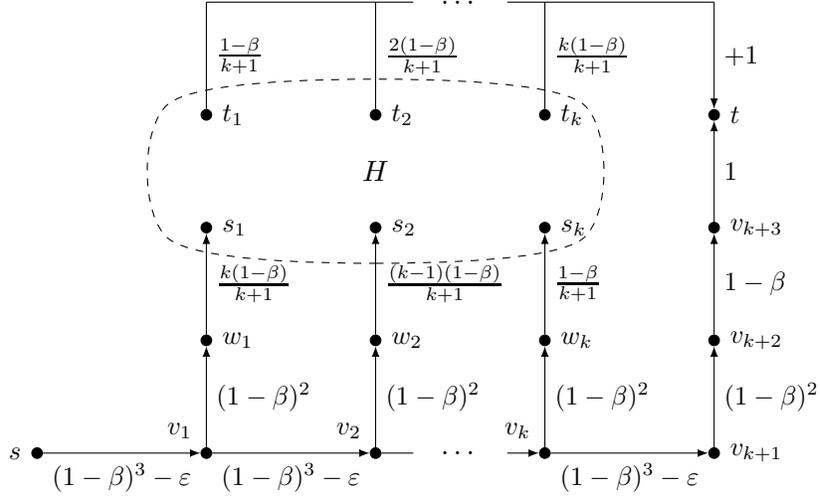
\begin{figure}[t]
	\center
		\begin{tikzpicture}[nst/.style={draw,circle,fill=black,minimum size=4pt,inner sep=0pt]}, est/.style={draw,>=latex,->}]
			
			\node[nst] (v1) at (0,0) [label=left:\small{$s$}] {};
			\node[nst] (v2) at (2.25,0) [label=above left:\small{$v_1$}] {};
			\node[nst] (v3) at (4.5,0) [label=above left:\small{$v_2$}] {};
			\node[nst] (v4) at (6.75,0) [label=above left:\small{$v_{k}$}] {};
			\node[nst] (v5) at (9,0) [label=right:\small{$v_{k+1}$}] {};
			\node[nst] (v6) at (9,1.5) [label=right:\small{$v_{k+2}$}] {};
			\node[nst] (v7) at (9,3) [label=right:\small{$v_{k+3}$}] {};
			\node[nst] (v8) at (9,4.5) [label=right:\small{$t$}] {};
			\node[nst] (v9) at (2.25,1.5) [label=right:\small{$w_1$}] {};
			\node[nst] (v10) at (4.5,1.5) [label=right:\small{$w_2$}] {};
			\node[nst] (v11) at (6.75,1.5) [label=right:\small{$w_k$}] {};
			\node[nst] (v12) at (2.25,3) [label=right:\small{$s_1$}] {};
			\node[nst] (v13) at (4.5,3) [label=right:\small{$s_2$}] {};
			\node[nst] (v14) at (6.75,3) [label=right:\small{$s_k$}] {};
			\node[nst] (v15) at (2.25,4.5) [label=right:\small{$t_1$}] {};
			\node[nst] (v16) at (4.5,4.5) [label=right:\small{$t_2$}] {};
			\node[nst] (v17) at (6.75,4.5) [label=right:\small{$t_k$}] {};
		
			\draw[dashed] plot [smooth cycle] coordinates {(1.75,2.85) (4.5,2.525) (7.25,2.85) (7.25,4.65) (4.5,4.975) (1.75,4.65)};	
			\node at (4.5,3.75) {$H$};
			\node at (5.625,0) {$\dots$};
			\node at (5.625,6) {$\dots$};
		
			\path (v1) edge[est] node [below] {\small{$(1-\beta)^3 - \varepsilon$}} (v2)
			(v2) edge[est] node [below] {\small{$(1-\beta)^3 - \varepsilon$}} (v3)
			(v3) edge (5,0)
			(6.25,0) edge[est] (v4)
			(v4) edge[est] node [below] {\small{$(1-\beta)^3 - \varepsilon$}} (v5)
			(v5) edge[est] node [right] {\small{$(1-\beta)^2$}} (v6)
			(v6) edge[est] node [right] {\small{$1-\beta$}} (v7)
			(v7) edge[est] node [right] {\small{$1$}} (v8)
			(v2) edge[est] node [right] {\small{$(1-\beta)^2$}} (v9)		
			(v3) edge[est] node [right] {\small{$(1-\beta)^2$}} (v10)
			(v4) edge[est] node [right] {\small{$(1-\beta)^2$}} (v11)
			(v9) edge[est] node [right] {\small{$\frac{k(1-\beta)}{k+1}$}} (v12)		
			(v10) edge[est] node [right] {\small{$\frac{(k-1)(1-\beta)}{k+1}$}} (v13)
			(v11) edge[est] node [right] {\small{$\frac{1-\beta}{k+1}$}} (v14)
			(v17) edge node [right] {\small{$\frac{k(1-\beta)}{k+1}$}} (6.75,6)		
			(v16) edge node [right] {\small{$\frac{2(1-\beta)}{k+1}$}} (4.5,6)
			(v15) edge node [right] {\small{$\frac{1-\beta}{k+1}$}} (2.25,6)
			(2.25,6) edge  (5,6)
			(6.25,6) edge  (9,6)
			(9,6) edge[est] node [right] {\small{$+ 1$}} (v8);
			
		\end{tikzpicture}
	\caption{The graph $G$ with an embedding of $H$.}\label{fig:g1}
\end{figure}

We proceed to describe the MS instance ${\cal J}$.
Let $\beta\in(0,1)$ be any value with the property that its encoding length is polynomial in that of ${\cal I}$.
Set $r= 1/\beta$. 
The task graph $G$ is constructed as follows, see also Figure~\ref{fig:g1}.
It consists of a source $s$ and a target $t$.
These two vertices are connected by a directed {\em main path\/} along intermediate vertices $v_1, \ldots, v_{k+3}$.
The first $k+1$ edges of this main path each have a cost of $(1-\beta)^3-\varepsilon$, where $\varepsilon$ is a positive constant satisfying
\[\varepsilon < \min\biggl\{\beta  \frac{1-\beta}{k+1}, \beta\frac{(1-\beta)^3}{1+\beta}\biggr\}.\]
The last three edges of the main path, connecting $v_{k+1}$ to $t$, have a cost of $(1-\beta)^2$, $1-\beta$ and $1$ respectively. 
Additionally, $G$ contains $k$ {\em shortcuts\/} that connect every $v_i$, with $1\leq i\leq k$, to $t$ via an embedding of $H$.
More formally, $H$ is added to $G$. 
The $i$-th shortcut starts at $v_i$.
It visits a distinct vertex $w_i$ along an edge of cost $(1-\beta)^2$.
Vertex $w_i$ is connected to $s_i$ in $H$.
The edge cost is $(k+1-i)(1-\beta)/(k+1)$.
Finally, vertex $t_i$ of $H$ is connected to $t$ via an edge of cost $i(1-\beta)/(k+1)+1$. 
In Figure~\ref{fig:g1} the latter edge cost is shown as two terms, namely \ $i(1-\beta)/(k+1)$ and $1$, in order to keep the labels of the parallel edges $(t_i,t)$ simple. 
Note that for any shortcut $i$, the edge costs of $(w_i,s_i)$ and $(t_i,t)$ complement each other, i.e.\ they sum to exactly $(1-\beta) +1$.
The edges of $H$ all have a cost of~$0$.
We remark that at every vertex different from $t$, the perceived value of the reward is $\beta r = 1$. 
The resulting graph $G$ is acyclic and its encoding length is polynomial in that of ${\cal I}$.
We next prove that ${\cal I\/}$ has a solution if and only if ${\cal J}$ has one. 

($\Longrightarrow$) First assume that ${\cal I}$ has a solution, i.e.\ there exist $k$ vertex-disjoint paths, one connecting every $s_i$ to the corresponding $t_i$. 
In the embedding of $H$ we remove all edges, except for the $k$ vertex-disjoint paths.
Let $G'$ be the resulting subgraph of $G$.
We will show that $G'$ is motivating, for $r=1/\beta$.
More specifically, we will argue that the agent travels along the main path from $s$ to $t$.
If the agent resides at one of the first $k$ vertices $v_i$ it has two options.
Either it traverses $(v_{i}, v_{i+1})$ and follows the main path, or it takes $(v_{i}, w_{i})$ and walks along the $i$-th shortcut.
Let $s=0$.
For $0\leq i<k$, the perceived cost of traversing $(v_{i}, v_{i+1})$ and following the $(i+1)$-st shortcut afterwards is $(1-\beta)^3 - \varepsilon + \beta((1-\beta)^2 + (1-\beta) + 1)$.
According to Lemma~\ref{lem:geoseries}, the value of the perceived cost simplifies to $1-\varepsilon$.
Note that similar calculations are scattered throughout the entire proof.
For the sake of brevity, Lemma~\ref{lem:geoseries} will not be referred to explicitly each time.
If the agent is at $v_k$, its perceived cost in following the main path to $t$ is also $1-\varepsilon$.
Hence, taking $(v_{i}, v_{i+1})$ is a motivating option.
In contrast, if the agent resides at $v_i$, with $1\leq i \leq k$, and plans to traverse $(v_{i},w_{i})$, following the $i$-th shortcut, its perceived cost is $1$.
Although this option is also motivating, it is perceived as more expensive than taking $(v_{i}, v_{i+1})$.
As a result, the agent follows the main path until it reaches $v_{k+1}$.
At this point the agent has no option but to stay on the main path.
The perceived cost at any of the vertices $v_{k+1}$, $v_{k+2}$ and $v_{k+3}$ is $1$. 
Thus subgraph $G'$ is indeed motivating. 

($\Longleftarrow$) Next assume that ${\cal I}$ does not have a solution. 
We prove that no subgraph $G'$ of $G$ is motivating.
Consider any subgraph $G'$.
Observe that $G'$ is only motivating if the agent never leaves the main path.
Otherwise the agent must visit some $t_i$ on its way to $t$ at which point it perceives a cost of $i(1-\beta)/(k+1) + 1 >1$ and abandons. 
We therefore focus on subgraphs $G'$ that contain all edges of the main path.
More specifically, we focus on subgraphs $G'$ in which the agent walks along the main path.
We say that the $i$-th shortcut is {\em degenerate\/} if the total edge cost of a cheapest path from $v_i$ to $t$ via $(v_i,w_i)$ is different from the target value $\theta=(1-\beta)^2 + (1-\beta) +1$.
In particular, the $i$-th shortcut is degenerate if there is no path from $v_i$ to $t$ via $(v_i,w_i)$, in which case the perceived cost of the shortcut is infinite. 
Note that by construction, every degenerate shortcut must miss the target value by $(1-\beta)/(k+1)$ or more.

We first argue that there is at least one degenerate shortcut in $G'$.
For the sake of contradiction, assume no such shortcut exists.
This means that there is a cheapest path $P_i$ from $v_i$ to $t$ via $(v_i,w_i)$ for all $1\leq i\leq k$.
By construction, $P_i$ traverses $(w_i,s_i)$.
Remember that the total cost of $P_i$ must sum up to $\theta$.
The only way to achieve this is if $P_i$ ends in $(t_i,t)$.
Furthermore, $P_i$ must be vertex disjoint from all other paths $P_j$ with $j<i$.
Otherwise $P_i$ would not be a shortest path from $v_i$ to $t$, given that $c(t_j,t) < c(t_i,t)$.
However, this implies that there are $k$ vertex disjoint paths in $H$, one from each $s_i$ to the corresponding $t_i$, which contradicts the assumption that ${\cal I}$ has no solution.

Now that we established the existence of a degenerate shortcut, we distinguish two cases. 
Either there exists a degenerate shortcut $i$ such that the cost of a cheapest path from $v_i$ to $t$ via $(v_i,w_i)$ is less than $\theta$ or for each degenerate shortcut $i$ the cost of a cheapest path from $v_i$ to $t$ via $(v_i,w_i)$ is greater than $\theta$. 

We study the first case first.
Let $i$ be the largest index of a degenerate shortcut such that the cheapest path from $v_i$ to $t$ via $(v_i,w_i)$ is less than $\theta$.
When located at $v_i$ the agent perceives cost less or equal to
\[(1-\beta)^2 + \beta\biggl((1-\beta)+1-\frac{1-\beta}{k+1}\biggr)=1-\beta\frac{1-\beta}{k+1}<1-\varepsilon\]
along $(v_i,w_i)$.
Conversely, in planning a cheapest path along $(v_i,v_{i+1})$ and following a subsequent shortcut or the main path, the agent perceives a cost of at least $1-\varepsilon$.
This holds true because all subsequent shortcuts are of cost $\theta$ or more.
By choice of $\varepsilon$, the perceived cost along $(v_i,w_i)$ is less than the perceived cost along $(v_i,v_{i+1})$.
However, this contradicts our assumption that the agent stays on the main path. 

We finally study the second case.
Suppose that the $i$-th shortcut is degenerate and consider the agent planing its path from $v_{i-1}$ to $t$ via $(v_{i-1},v_i)$.
The agent has two options.
If the agent plans to follow the $i$-th shortcut, it perceives a cost greater or equal to
\[(1-\beta)^3 -\varepsilon + \beta\biggl((1-\beta)^2+(1-\beta) +1 + \frac{1-\beta}{k+1}\biggr) = 1+\beta \frac{1-\beta}{k+1}-\varepsilon > 1.\]
The inequality holds by choice of $\varepsilon$. 
If the agent plans tor traverse $(v_{i-1},v_i)$ instead, taking either a shortcut $j>i$ or following the main path all the way to $t$, it perceives a cost of at least 
\[(1-\beta)^3 -\varepsilon + \beta\big((1-\beta)^3 -\varepsilon + (1-\beta)^2+(1-\beta) +1\big) = 1 + (1+\beta)\biggl(\beta\frac{(1-\beta)^3}{1+\beta} -\varepsilon\biggr)>1.\]
This holds true because no shortcut is of cost less than $\theta$.
Once more, the perceived cost is greater than $1$ by definition of $\varepsilon$. 
Hence the agent certainly abandons at $v_{i-1}$, which shows that $G'$ cannot be motivating.
\end{proof}


\section{Approximating optimum rewards}\label{sec:approx}

Considering that the decision problem MS is NP-hard, the next and arguably natural question is whether there exist good approximation algorithms. 
Hence we formulate MS as an optimization problem.

\begin{definition}[MOTIVATING SUBGRAPH OPT]
Given a task graph $G$ and a bias factor $\beta\in(0,1)$, determine the minimum reward $r$ to place at $t$ such that $G$ contains a motivating subgraph.
\end{definition}

We present two simple algorithms.
The first algorithm is designed for small values of $\beta$.
The second algorithm computes good solutions for large $\beta$. 
The algorithms output $r$ as well as a corresponding motivating subgraph $G'$.
Both strategies are somewhat reminiscent of Proposition~\ref{prop:01beta}.
A combination of them yields a $(1+\sqrt{n}$)-approximation, for any $\beta\in(0,1)$. 

Suppose that $\beta$ is small.
Then the agent is highly oblivious to the future. 
Consequently it is sensible to let the agent travel along a path that minimizes the maximum cost of any edge. 
We call a path with this property a {\em minmax path\/}. 
A minmax path can be computed easily in polynomial-time.
For instance, starting with an empty subgraph, insert the edges of $G$ in non-decreasing order of cost until $s$ and $t$ become connected for the first time.
Next, choose one of the possibly several paths that connect $s$ and $t$ in the subgraph as minmax path.
Our first algorithm, called {\sc MinmaxPathApprox}, computes a minmax path $P$ and returns the corresponding $G'$ containing only the edges of $P$.
Furthermore, the algorithm sets $r$ according to the maximum over all perceived cost along $P$, or more formally $r = \max\{\zeta_{G'}(v) \mid v \in P\}/\beta$.
Clearly, this reward is sufficient to make $G'$ motivating.

\begin{proposition}\label{prop:alg1}
{\sc MinmaxPathApprox} achieves an approximation ratio of $1+\beta n$, for any $\beta\in(0,1)$. 
\end{proposition}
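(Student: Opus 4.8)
The plan is to bound the reward $r$ returned by {\sc MinmaxPathApprox} against a lower bound for $\OPT$ expressed through the \emph{minmax value} of $G$. Let $c^\star$ denote the minimum, taken over all $s$--$t$ paths of $G$, of the maximum edge cost along the path; by the way the algorithm builds $P$ (inserting edges in non-decreasing cost order until $s$ and $t$ are connected), $c^\star$ is exactly the largest edge cost occurring on the minmax path $P$, and every edge of $P$ has cost at most $c^\star$. I assume $G$ admits an $s$--$t$ path, since otherwise no motivating subgraph exists and the statement is vacuous, and I dispose of the case $c^\star=0$ separately: then every edge of $P$ has cost $0$, so $\zeta_{G'}(v)=0$ for all $v\in P$, the algorithm returns $r=0$, $G'$ is already motivating, and $\OPT=0$ as well.

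First I would establish the lower bound $\OPT\ge c^\star/\beta$. Fix a motivating subgraph $G^\star$ witnessing $\OPT$ and let $Q$ be any $s$--$t$ path that the agent traverses in $G^\star$. For every edge $(v,w)$ on $Q$, the agent moves along $(v,w)$ while located at $v$, hence $\zeta_{G^\star}(v)=c(v,w)+\beta d_{G^\star}(w)$, and since $G^\star$ is motivating, $\zeta_{G^\star}(v)\le\beta\OPT$; using $d_{G^\star}(w)\ge 0$ this gives $c(v,w)\le\beta\OPT$. So every edge of $Q$ costs at most $\beta\OPT$, and since $Q$ is an $s$--$t$ path of $G$ we get $c^\star\le\max_{(v,w)\in Q}c(v,w)\le\beta\OPT$, that is, $c^\star\le\beta\OPT$.

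For the upper bound, note that $G'=P$ is a single path $s=v_0,v_1,\dots,v_\ell=t$ with $\ell\le n-1$ edges, each of cost at most $c^\star$. For every internal vertex $v_i$ we have $\zeta_{G'}(v_i)=c(v_i,v_{i+1})+\beta\,d_{G'}(v_{i+1})$, and the suffix of $P$ from $v_{i+1}$ to $t$ has at most $n-2$ edges, so $\zeta_{G'}(v_i)\le c^\star+\beta(n-2)c^\star$. Therefore $r=\max_{v\in P}\zeta_{G'}(v)/\beta\le c^\star(1+\beta(n-2))/\beta=c^\star/\beta+(n-2)c^\star$. Substituting $c^\star\le\beta\OPT$ gives $r\le\OPT+(n-2)\beta\OPT=(1+\beta(n-2))\OPT\le(1+\beta n)\OPT$, while the choice of $r$ guarantees $\zeta_{G'}(v)\le\beta r$ for all $v\in P$, so $G'$ is indeed motivating. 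This yields the claimed ratio.

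The only delicate point is the lower bound step: one must argue in terms of the edges actually traversed by the (possibly non-deterministic) agent, and observe that a single such traversal already extracts an $s$--$t$ path all of whose edges are cheap; everything else is routine length-counting on the minmax path.
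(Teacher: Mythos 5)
Your proof is correct and follows essentially the same route as the paper's: a lower bound of $c^\star/\beta$ on the optimum via the maximum edge the agent must traverse in any motivating subgraph, and an upper bound $r\le c^\star/\beta+(n-2)c^\star$ from the at most $n-1$ edges of the minmax path. You merely spell out the lower-bound step (and the degenerate cases) in more detail than the paper does, which is fine.
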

\begin{proof}
Let $c$ denote the maximum cost of any edge along the path $P$ computed by {\sc MinmaxPathApprox}.
By definition of $P$ the agent must encounter an edge of cost at least $c$ in any motivating subgraph.
Thus the optimum reward is lower bounded by $c/\beta$.
Conversely, the cost of every edge in $P$, of which there are at most $n-1$, is upper bounded by $c$.
This means that {\sc MinmaxPathApprox} returns a reward $r$, which is upper bounded by
\[r = \frac{\max\{\zeta_{G'}(v) \mid v \in P\}}{\beta} \leq \frac{c}{\beta} + (n-2)c \leq \frac{c}{\beta} + nc,\]
which yields the desired approximation ratio of $1+\beta n$. 
\end{proof}

Our second algorithm, called {\sc CheapestPathApprox}, computes a path $P$ of minimum total cost from $s$ to $t$ and returns the corresponding $G'$ containing only the edges of $P$.
Again, the algorithm sets the reward to $r = \max\{\zeta_{G'}(v) \mid v \in P\}/\beta$.

\begin{proposition}\label{prop:alg2}
{\sc CheapestPathApprox} achieves an approximation ratio of $1/\beta$, for any $\beta\in(0,1)$. 
\end{proposition}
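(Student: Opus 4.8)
The plan is to mirror the structure of the proof of Proposition~\ref{prop:alg1}: exhibit a lower bound on the optimum reward in terms of the total cost of a cheapest $s$--$t$ path, and an upper bound on the reward returned by {\sc CheapestPathApprox} in terms of the same quantity, then take the ratio. Let $P$ be the cheapest $s$--$t$ path computed by the algorithm and let $C = \sum_{e \in P} c(e)$ be its total cost. The key observation for the upper bound is that for any vertex $v$ on $P$, the perceived cost $\zeta_{G'}(v)$ is at most the true cost of the first edge out of $v$ on $P$ plus $\beta$ times the cost of the remainder of $P$, and this is at most the full cost of the tail of $P$ from $v$ to $t$, which is at most $C$. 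Hence $r = \max\{\zeta_{G'}(v) \mid v \in P\}/\beta \le C/\beta$.

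For the lower bound I would argue that the optimum reward is at least $C$. Let $G^*$ be any motivating subgraph of $G$, and consider the agent's walk in $G^*$; it traverses some $s$--$t$ path $Q$ in $G^*$. Since $Q$ is an $s$--$t$ path in $G$, its total cost is at least $C$ (as $P$ is a cheapest such path). Now at the source $s$, the agent's lowest perceived cost is $\zeta_{G^*}(s) = c_{G^*}(s,w) + \beta d_{G^*}(w)$ for the chosen first edge $(s,w)$; because the edge taken lies on $Q$ and $d_{G^*}(w)$ is the cost of a cheapest path from $w$ to $t$ in $G^*$, which is at most the remaining cost of $Q$, this does not immediately give $\ge C$. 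So instead I would use that for $G^*$ to be motivating we need $\zeta_{G^*}(s) \le \beta r$; and since the full true cost of the path the agent ends up taking is at least $C$, while $\zeta_{G^*}(s) \ge \beta \cdot (\text{cost of cheapest } s\text{--}t \text{ path in } G^*) \ge \beta C$ whenever $\beta \le 1$ — wait, that needs the first edge also discounted. The cleaner route: $\zeta_{G^*}(s) \ge \beta \, d_{G^*}(s) \ge \beta C$, using $c_{G^*}(s,w) \ge \beta c_{G^*}(s,w)$ for $\beta\le 1$ so that $c_{G^*}(s,w)+\beta d_{G^*}(w) \ge \beta(c_{G^*}(s,w)+d_{G^*}(w)) \ge \beta \, d_{G^*}(s) \ge \beta C$. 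Combined with $\zeta_{G^*}(s) \le \beta r$ this gives $r \ge C$, so $\OPT \ge C$.

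Putting the two bounds together, the algorithm returns $r \le C/\beta \le \OPT/\beta$, which is the claimed approximation ratio of $1/\beta$. I expect the only delicate point to be the lower bound: one must be careful that the agent in $G^*$ might take a path strictly more expensive than $C$, and that the undiscounted first edge does not spoil the inequality $\zeta_{G^*}(s) \ge \beta d_{G^*}(s)$ — this is exactly where $\beta \le 1$ is used, and it is the step I would write out most carefully. Everything else is a routine telescoping of edge costs along $P$.
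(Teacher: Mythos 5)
Your proposal is correct and follows essentially the same route as the paper: the upper bound $r \le C/\beta$ via bounding the perceived cost along the cheapest path $P$ by its (tail) cost, and the lower bound via $\zeta(s) \ge \beta d(s) \ge \beta C$ in any motivating subgraph, forcing an optimum reward of at least $C$. The step you flag as delicate (absorbing the undiscounted first edge using $\beta \le 1$) is exactly the observation the paper relies on when it states that the agent at $s$ faces a perceived cost of at least $\beta d_G(s)$.
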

\begin{proof}
Let $P$ be the path computed by {\sc CheapestPathApprox}.
At any vertex $v$ of $P$, the agent perceives a cost of at most $d_{G'}(v)$, which is upper bounded by $d_{G'}(s)$.
Thus $d_{G'}(s)/\beta$ is an upper bound on the the reward $r$ calculated by {\sc CheapestPathApprox}.
In an optimal solution, when located at $s$, the agent is faced with a cost of at least $\beta d_{G}(s)$.
Consequently, a reward of at least $d_{G}(s)$ is required to motivate the agent.
Because $P$ is a cheapest path, it holds that $d_{G}(s) = d_{G'}(s)$, which establishes an approximation ratio of $1/\beta$.
\end{proof}

Let {\sc CombinedApprox} be the combined algorithm that chooses {\sc MinmaxPathApprox} if $\beta \leq 1/\sqrt{n}$ and {\sc CheapestPathApprox} otherwise. 
Propositions~\ref{prop:alg1} and~\ref{prop:alg2} immediately imply the following result. 
\begin{theorem}\label{th:alg3}
{\sc CombinedApprox} achieves an approximation ratio of $1+\sqrt{n}$, for any $\beta\in(0,1)$. 
\end{theorem}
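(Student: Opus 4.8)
The plan is to combine Propositions~\ref{prop:alg1} and~\ref{prop:alg2} by a straightforward case distinction on the bias factor, matching the two branches of {\sc CombinedApprox}. Since {\sc CombinedApprox} runs {\sc MinmaxPathApprox} exactly when $\beta \leq 1/\sqrt{n}$ and {\sc CheapestPathApprox} exactly when $\beta > 1/\sqrt{n}$, it suffices to bound each of the two guarantees on its respective range of $\beta$.

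First I would treat the case $\beta \leq 1/\sqrt{n}$. Here the algorithm outputs the {\sc MinmaxPathApprox} solution, whose ratio is at most $1+\beta n$ by Proposition~\ref{prop:alg1}. Since this expression is increasing in $\beta$, it is maximized at the threshold $\beta = 1/\sqrt{n}$, giving a ratio of at most $1 + n/\sqrt{n} = 1+\sqrt{n}$. Symmetrically, in the case $\beta > 1/\sqrt{n}$ the algorithm outputs the {\sc CheapestPathApprox} solution, whose ratio is at most $1/\beta$ by Proposition~\ref{prop:alg2}; since $1/\beta$ is decreasing in $\beta$, on this range it is bounded by $1/(1/\sqrt{n}) = \sqrt{n} \leq 1+\sqrt{n}$. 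The two cases exhaust all $\beta \in (0,1)$, so in every case {\sc CombinedApprox} attains a ratio of at most $1+\sqrt{n}$, which is the claim.

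There is essentially no technical obstacle in this argument; it is a one-line case analysis once the two component propositions are in hand. The only point worth noting — and arguably the reason the threshold is placed where it is — is that $1/\sqrt{n}$ is precisely the value of $\beta$ that balances the two guarantees: equating the dominant terms of $1+\beta n$ and $1/\beta$ yields $\beta n \approx 1/\beta$, i.e.\ $\beta \approx 1/\sqrt{n}$, so no other split of the interval $(0,1)$ could yield a better worst-case bound from these two algorithms.
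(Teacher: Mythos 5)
Your argument is correct and is exactly the reasoning the paper intends: the paper states that Propositions~\ref{prop:alg1} and~\ref{prop:alg2} immediately imply the theorem, and your case analysis ($1+\beta n \leq 1+\sqrt{n}$ for $\beta \leq 1/\sqrt{n}$, and $1/\beta < \sqrt{n}$ otherwise) is precisely the omitted one-line verification. No difference in approach and no gap.
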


We next prove that, although our $(1+\sqrt{n})$-approximation algorithm is simple, it achieves the best possible performance guarantee, up to a small constant factor, that can be hoped for in polynomial-time.
For the proof of the theorem we need the next technical lemma.

\begin{lemma}\label{lem:rho}
For any integer $\rho$, with $\rho\geq1$, it holds that 
\[\biggl(1-\frac{1}{3\rho+3}\biggr)^{3\rho+3} > \frac{1}{3}.\]
\end{lemma}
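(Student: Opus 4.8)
The plan is to reduce the claim to a single numerical check by exploiting monotonicity. Write $m = 3\rho + 3$, so that $\rho \geq 1$ forces $m \geq 6$, and the inequality to be shown becomes $(1 - \tfrac{1}{m})^m > \tfrac{1}{3}$ for every integer $m \geq 6$. First I would recall the classical fact that the sequence $\bigl(1 - \tfrac{1}{m}\bigr)^{m}$ is strictly increasing in $m$ and converges to $1/e$; equivalently, writing $k = m-1$, its reciprocal $\bigl(1 + \tfrac{1}{k}\bigr)^{k+1}$ is strictly decreasing in $k$, which is a standard consequence of the AM--GM (or Bernoulli) inequality applied to consecutive ratios. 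Granting this monotonicity, it suffices to verify the inequality at the smallest admissible value $m = 6$, i.e.\ at $\rho = 1$.

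For $m = 6$ the left-hand side equals $(5/6)^6 = 5^6/6^6 = 15625/46656$. Since $3 \cdot 15625 = 46875 > 46656$, we obtain $(5/6)^6 > 1/3$, and hence $(1-\tfrac{1}{m})^m \geq (5/6)^6 > 1/3$ for every $m \geq 6$, which is exactly the statement of the lemma with $m = 3\rho+3$.

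If one prefers a fully self-contained argument that does not invoke the cited monotonicity, an alternative is to pass to logarithms: using $-\ln(1-x) = \sum_{i\geq 1} x^i/i \leq x + \tfrac{x^2}{2(1-x)}$ with $x = 1/m$ gives $\ln(1 - \tfrac{1}{m}) \geq -\tfrac{1}{m} - \tfrac{1}{2m(m-1)}$, hence $m\ln(1-\tfrac{1}{m}) \geq -1 - \tfrac{1}{2(m-1)}$, and this exceeds $-\ln 3$ as soon as $\tfrac{1}{2(m-1)} < \ln 3 - 1$, which holds for all $m \geq 7$; the remaining case $\rho = 1$ (that is, $m = 6$) is then dispatched by the direct computation above. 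Either way there is no deep step here: the only thing to be careful about is the reduction to the base case, the elementary observation that $\bigl(1-\tfrac{1}{m}\bigr)^m$ approaches $1/e > 1/3$ \emph{from below} so that the worst case is the smallest $\rho$, and keeping the arithmetic $3 \cdot 5^6 > 6^6$ honest.
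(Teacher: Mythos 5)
Your proposal is correct and follows essentially the same route as the paper: both reduce to the monotonicity of $(1-1/n)^n$ and the base case $(5/6)^6 > 1/3$ at $\rho = 1$. Your explicit arithmetic check $3\cdot 5^6 = 46875 > 46656 = 6^6$ and the alternative logarithmic argument are fine additions but not a different method.
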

\begin{proof}
The sequence $(1-1/n)^n$ is monotonically increasing for $n\geq 1$.
Hence it holds that
\[\biggl(1-\frac{1}{3\rho+3}\biggr)^{3\rho+3} \geq (1-{1/6})^6 > {1/3}.\]
\end{proof}

\begin{theorem}\label{th:approx}
MS-OPT is NP-hard to approximated within a ratio of $1/4 \sqrt{n}$.
\end{theorem}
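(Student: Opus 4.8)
The plan is to strengthen the reduction from Theorem~\ref{th:comp} so that a "yes"-instance of $k$-DCP forces a motivating subgraph with a small reward, while a "no"-instance forces every motivating subgraph to require a reward larger by a factor of roughly $\sqrt{n}/4$. The gap in Theorem~\ref{th:comp} between "yes" (reward $\approx 1/\beta$ suffices, since all perceived costs are $\le 1$) and "no" (some perceived cost strictly exceeds $1$) is only a tiny additive $\varepsilon$-type margin; to turn it into a multiplicative $\Theta(\sqrt n)$ gap I would iterate the shortcut gadget in series. Concretely, I would chain $\rho$ copies of the $k$-DCP graph $H$ along the main path, so that the agent must "survive" $\rho$ consecutive blocks, each built exactly as in Figure~\ref{fig:g1}. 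Choose $\beta$ so that $(1-\beta)^{3}\approx\beta$ is not what we want; instead set the block length so that $(1-1/(3\rho+3))^{3\rho+3}>1/3$ by Lemma~\ref{lem:rho}, i.e.\ take $\beta=1/(3\rho+3)$ and blocks of $3$ edges between consecutive shortcut-attachment points. With $n$ the total number of vertices, $\rho=\Theta(\sqrt n)$ will be the right choice, and $n\approx (\text{size of }H)\cdot\rho$.

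The key steps, in order, are: (1) Build $G$ by concatenating $\rho$ blocks; in block $j$ attach the $i$-th shortcut to $v$ via $w$, route through a fresh copy $H^{(j)}$ of $H$, and set the two "complementary" edge costs $c(w,s_i)$ and $c(t_i,t)$ (the latter a parallel edge to $t$) exactly as before but now calibrated with $\beta=1/(3\rho+3)$, using Lemma~\ref{lem:geoseries} so that from every main-path vertex the perceived cost of walking to $t$ equals $1$ (up to the $\varepsilon$ slack). (2) Completeness: if every $H^{(j)}$ has $k$ vertex-disjoint connecting paths, delete all other $H$-edges; argue as in Theorem~\ref{th:comp} that the agent follows the main path, every perceived cost is $\le 1$, so reward $r=1/\beta=3\rho+3$ suffices — this is the "small" reward, $O(\rho)$. (3) Soundness: if some $H^{(j)}$ fails, then in every subgraph the agent, upon entering block $j$, is forced either onto a degenerate shortcut or to abandon; the crucial point is that a degenerate shortcut in block $j$ is still $\beta$-discounted $3j$ times from $s$, so to motivate the agent past it one needs a reward inflated by a factor $(1-\beta)^{-3j}$. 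Taking $j$ near $\rho$ and invoking Lemma~\ref{lem:rho}, $(1-\beta)^{-3\rho}=\bigl((1-\tfrac{1}{3\rho+3})^{3\rho+3}\bigr)^{-3\rho/(3\rho+3)}>3^{\,3\rho/(3\rho+3)}$, which multiplied against the baseline gives a required reward of order $\rho\cdot 3^{\Omega(1)}$ — but to reach the full $\sqrt n/4$ I would instead let the inflation accumulate geometrically across blocks, so a "no"-instance forces reward $\ge \rho\cdot 3^{\rho-O(1)}$ or, more carefully tuned, $\ge (\sqrt n/4)\cdot r_{\mathrm{yes}}$. (4) Conclude: a $(\sqrt n/4)$-approximation would separate the two cases in polynomial time, deciding $k$-DCP, so $\mathrm{P}=\mathrm{NP}$.

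I would set up the arithmetic so that the "yes" reward is $r_{\mathrm{yes}}=1/\beta=\Theta(\rho)$ and the "no" lower bound is $r_{\mathrm{no}}\ge \bigl(1/\beta\bigr)\cdot(1-\beta)^{-3\rho}$; since $(1-\beta)^{3\rho+3}>1/3$ by Lemma~\ref{lem:rho} we get $(1-\beta)^{-3\rho}>3^{\,3\rho/(3\rho+3)}\cdot(1-\beta)^{3}$, and with $\rho$ chosen as $\Theta(\sqrt n)$ and the block count driving $n$, the ratio $r_{\mathrm{no}}/r_{\mathrm{yes}}$ exceeds $\sqrt n/4$. Verifying the $\varepsilon$-perturbations still work with $\beta$ this small — i.e.\ choosing $\varepsilon<\min\{\beta(1-\beta)/(k+1),\ \beta(1-\beta)^{3}/(1+\beta)\}$ per block and checking it does not swamp the per-block margin once discounted $3\rho$ times — is a bookkeeping nuisance but goes through because the margin $\beta(1-\beta)/(k+1)$ is polynomially bounded below and only the reward (not the decision threshold) scales up.

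The main obstacle I anticipate is the soundness analysis: one must show that the adversarial agent cannot "cheat" by weaving in and out of the chained $H$-copies to reach $t$ cheaply, and that the worst degenerate block genuinely sits deep in the chain (near index $\rho$) so its discount factor $(1-\beta)^{-3\rho}$ is the one that controls the required reward. Handling the case analysis — degenerate shortcut that is too cheap versus too expensive, now compounded over $\rho$ blocks and with the agent free to prune edges — is where Theorem~\ref{th:comp}'s two-case argument must be carried out uniformly across all blocks, and making the calibration robust to the agent choosing any combination of shortcuts is the technically delicate part.
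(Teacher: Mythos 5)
Your overall frame is right---reduce from $k$-DCP, set $\beta=1/(3\rho+3)$, use Lemma~\ref{lem:rho}, and engineer a gap so that a $\rho$-approximation with $\rho\geq\sqrt{n}/4$ would decide $k$-DCP---but the amplification mechanism you propose does not work in this model, and it is precisely the amplification that the proof hinges on. You claim that a degenerate shortcut sitting in the $j$-th serial block is ``$\beta$-discounted $3j$ times from $s$,'' so that motivating the agent past it requires a reward inflated by $(1-\beta)^{-3j}$, and that this inflation ``accumulates geometrically across blocks.'' In Kleinberg--Oren's model there is no per-step discounting: at any vertex the agent pays the immediate edge at full cost and discounts \emph{all} remaining cost and the reward by the single factor $\beta$ (quasi-hyperbolic with $\delta=1$), and the abandonment test $\zeta(v)\leq\beta r$ is local. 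Chaining copies of $H$ in series therefore does not compound anything; in a ``no''-instance the perceived cost at the critical vertex exceeds $1$ only by an additive margin of order $\beta(1-\beta)^{3\rho+1}/(k+1)$, so a reward only slightly above $1/\beta$ would again suffice, and your construction yields no multiplicative gap. Your own arithmetic betrays this: with $\beta=1/(3\rho+3)$, Lemma~\ref{lem:rho} gives $(1-\beta)^{-3\rho}<3$, so the bound $r_{\mathrm{no}}\geq(1/\beta)(1-\beta)^{-3\rho}$ is a constant-factor gap, not $\sqrt{n}/4$. Moreover, you cannot rescue this by scaling up edge costs geometrically in later blocks, because any edge of cost $C$ on the agent's actual walk forces $r\geq C/\beta$ and simultaneously destroys the ``yes''-case bound $r_{\mathrm{yes}}=1/\beta$, since the immediate edge is never discounted.

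The paper's construction achieves the gap by a different device: an \emph{amplification unit} prepended to a single (rescaled) copy of the Theorem~\ref{th:comp} gadget. It is a path $s=u_0,u_1,\dots,u_{9\rho^2}$ of cheap edges (cost $(1-\beta)^{3\rho+3}-\varepsilon$, which Lemma~\ref{lem:rho} keeps above $1/3$), where every $u_i$ has an escape edge to a vertex $z$ whose outgoing edge to $t$ costs $\sum_{j=0}^{3\rho+1}(1-\beta)^j>\rho$. In the ``yes''-case the agent procrastinates exactly as in the car-wash example: it always plans to exit via $z$ (perceived cost $1-\varepsilon$, the expensive edge being safely in the discounted future) but keeps moving forward, eventually traverses the intact central unit, and a reward of $1/\beta$ suffices. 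In the ``no''-case, either some escape edge survives---then the agent, finding the central unit unattractive, actually walks to $z$ and faces perceived cost $>\rho$---or all escapes are pruned, in which case already at $s$ the agent perceives at least $\beta\cdot 9\rho^2\cdot\bigl((1-\beta)^{3\rho+3}-\varepsilon\bigr)>\rho$ because the path has $9\rho^2$ edges each costing more than $1/3$ and $\beta\approx 1/(3\rho)$. Either way no subgraph is motivating for reward $\rho/\beta$, and since $n<16m^2$ with $\rho=m$, the ratio is at least $\sqrt{n}/4$. So the missing idea in your proposal is this length-$\Theta(\rho^2)$ procrastination path with uniformly available expensive escapes; the serial-blocks-with-compounding-discount idea cannot be repaired within this discounting model.
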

\begin{proof}
Again we present a reduction from $k$-DCP. 
Let $\rho$ be an arbitrary positive integer.
The best choice of $\rho$ will be determined later.
Given an instance ${\cal I}$ of $k$-DCP we construct an instance ${\cal J}$ of MS-OPT, consisting of a task graph $G$ and a bias factor $\beta$, having the following two properties. 
(a)~If ${\cal I}$ has a solution, then $G$ has a subgraph that is motivating for a reward of $r=1/\beta$.
(b)~If ${\cal I}$ does not have a solution, then no subgraph of $G$ is motivating for a reward of at most $r=\rho /\beta$.
Hence any algorithm that achieves an approximation ratio of $\rho$ or better must solve instances of $k$-DCP. 

We begin with the description of ${\cal J}$.
As before an instance of ${\cal I}$ is specified by a graph $H$ together with $k$ vertex pairs $(s_1,t_1), \ldots, (s_k,t_k)$. 
Considering that Proposition~\ref{prop:alg2} gives a $(1/\beta)$-approximation, the bias factor of ${\cal J}$ cannot be chosen arbitrarily anymore.
It must be less than $1/\rho$.
For convenience we set $\beta=1/(3\rho+3)$.
From a structural point of view $G$ of ${\cal J}$ consists of two units, a {\em central unit\/} and an {\em amplification unit\/}.
The central unit contains an embedding of $H$.
The amplification unit precedes the central unit and increases any approximation error that might occur in the central unit. 

\begin{figure}[t]
\center
\begin{tikzpicture}[nst/.style={draw,circle,fill=black,minimum size=4pt,inner sep=0pt]}, est/.style={draw,>=latex,->}]
			
			\node[nst] (v1) at (-0.3,0) [label=left:\small{$u_{9\rho^2}$}] {};
			\node[nst] (v2) at (2.25,0) [label=above left:\small{$v_1$}] {};
			\node[nst] (v3) at (4.5,0) [label=above left:\small{$v_2$}] {};
			\node[nst] (v4) at (6.75,0) [label=above left:\small{$v_{k}$}] {};
			\node[nst] (v5) at (9,0) [label=right:\small{$v_{k+1}$}] {};
			\node[nst] (v6) at (9,1.5) [label=right:\small{$v_{k+2}$}] {};
			\node[nst] (v7) at (9,3) [label=right:\small{$v_{k+3\rho+3}$}] {};
			\node[nst] (v8) at (9,4.5) [label=right:\small{$t$}] {};
			\node[nst] (v9) at (2.25,1.5) [label=right:\small{$w_1$}] {};
			\node[nst] (v10) at (4.5,1.5) [label=right:\small{$w_2$}] {};
			\node[nst] (v11) at (6.75,1.5) [label=right:\small{$w_k$}] {};
			\node[nst] (v12) at (2.25,3) [label=right:\small{$s_1$}] {};
			\node[nst] (v13) at (4.5,3) [label=right:\small{$s_2$}] {};
			\node[nst] (v14) at (6.75,3) [label=right:\small{$s_k$}] {};
			\node[nst] (v15) at (2.25,4.5) [label=right:\small{$t_1$}] {};
			\node[nst] (v16) at (4.5,4.5) [label=right:\small{$t_2$}] {};
			\node[nst] (v17) at (6.75,4.5) [label=right:\small{$t_k$}] {};
		
			\draw[dashed] plot [smooth cycle] coordinates {(1.75,2.85) (4.5,2.525) (7.25,2.85) (7.25,4.65) (4.5,4.975) (1.75,4.65)};	
			\node at (4.5,3.75) {$H$};
			\node at (5.625,0) {$\dots$};
			\node at (5.625,6) {$\dots$};
			\node at (9,2.35) {$\vdots$};
			
			\path (v1) edge[est] node [below] {\small{$(1-\beta)^{3\rho+3}-\varepsilon$}} (v2)
			(v2) edge[est] node [below] {\small{$(1-\beta)^{3\rho+3}-\varepsilon$}} (v3)
			(v3) edge (5,0)
			(6.25,0) edge[est] (v4)
			(v4) edge[est] node [below] {\small{$(1-\beta)^{3\rho+3}-\varepsilon$}} (v5)
			(v5) edge[est] node [right] {\small{$(1-\beta)^{3\rho+2}$}} (v6)
			(v6) edge (9,1.875)
			(9,2.625) edge[est] (v7)
			(v7) edge[est] node [right] {\small{$1$}} (v8)
			(v2) edge[est] node [right] {\small{$(1-\beta)^{3\rho+2}$}} (v9)		
			(v3) edge[est] node [right] {\small{$(1-\beta)^{3\rho+2}$}} (v10)
			(v4) edge[est] node [right] {\small{$(1-\beta)^{3\rho+2}$}} (v11)
			(v9) edge[est] node [right] {\small{$\frac{k(1 - \beta)^{3\rho+1}}{k+1}$}} (v12)		
			(v10) edge[est] node [right] {\small{$\frac{(k-1)(1-\beta)^{3\rho+1}}{k+1}$}} (v13)
			(v11) edge[est] node [right] {\small{$\frac{(1-\beta)^{3\rho+1}}{k+1}$}} (v14)
			(v17) edge node [right] {\small{$\frac{k(1-\beta)^{3\rho+1}}{k+1}$}} (6.75,6)		
			(v16) edge node [right] {\small{$\frac{2(1-\beta)^{3\rho+1}}{k+1}$}} (4.5,6)
			(v15) edge node [right] {\small{$\frac{(1-\beta)^{3\rho+1}}{k+1}$}} (2.25,6)
			(2.25,6) edge  (5,6)
			(6.25,6) edge  (9,6)
			(9,6) edge[est] node [right] {\small{$+\sum\limits_{j=0}^{3\rho}(1-\beta)^j$}} (v8);
			
\end{tikzpicture}
\caption{The central unit of $G$.}\label{fig:g2}
\end{figure}

The central unit, depicted in Figure~\ref{fig:g2}, has the same overall structure as the graph constructed in the proof of Theorem~\ref{th:comp}.
There exists a {\em main path} and $k$ {\em shortcuts}, linking to an embedding of $H$.
However, there are important differences.
The main path is longer and edge costs are different. 
More specifically the main path starts at a vertex $u_{9\rho^2}$, which is the last vertex of the amplification unit, and ends at vertex $t$.
The path consists of $k+3\rho+3$ intermediate vertices $v_1, \ldots, v_{k+3\rho+3}$.
The first $k+1$ edges of the main path each have a cost of $(1-\beta)^{3\rho+3} -\varepsilon$, where $\varepsilon$ is a positive value satisfying
\[\varepsilon < \min\biggl\{\beta\frac{(1-\beta)^{3\rho+1}}{k+1}, \beta\frac{(1-\beta)^{3\rho+3}}{1+\beta}, \frac{1}{1+\rho}, (1-\beta)^{3\rho+3}-\frac{1}{3}\biggr\}.\] 
Note that $(1-\beta)^{3\rho+3}-1/3$ is a positive quantity according to Lemma~\ref{lem:rho}.
The remaining edges of the main path have increasing cost.
In particular, we set the cost of $(v_i,v_{i+1})$, with $k < i \leq k+3\rho+3$, to $(1-\beta)^{k+3\rho+3-i}$.
Each of the first $k$ vertices $v_i$ is the starting point of a shortcut.
Similar to the construction in Theorem~\ref{th:comp}, the $i$-th shortcut is routed through a distinct vertex $w_i$ before it enters the emending of $H$ and eventually reaches $t$.
The edges $(v_i,w_i)$ have a cost of $(1-\beta)^{3\rho+2}$.
Furthermore, the edges $(w_i,s_i)$ and $(t_i,t)$ are of cost $(k+1-i)(1-\beta)^{3\rho+1}/(k+1)$ and $i(1-\beta)^{3\rho+1}/(k+1)+ \sum_{j=0}^{3\rho} (1-\beta)^j$ respectively.
In Figure~\ref{fig:g2}, the later edge cost is shown as two terms, namely $i(1-\beta)^{3\rho+1}/(k+1)$ and $\sum_{j=0}^{3\rho} (1-\beta)^j$, in order to keep the labels of the parallel edges $(t_i,t)$ simple.
Note that for $1 \leq i \leq k$, the costs of $(v_i,s_i)$ and $(t_i,t)$ sum to exactly $\sum_{j=0}^{3\rho+1} (1-\beta)^j$.
All edges of $H$ have cost~0.

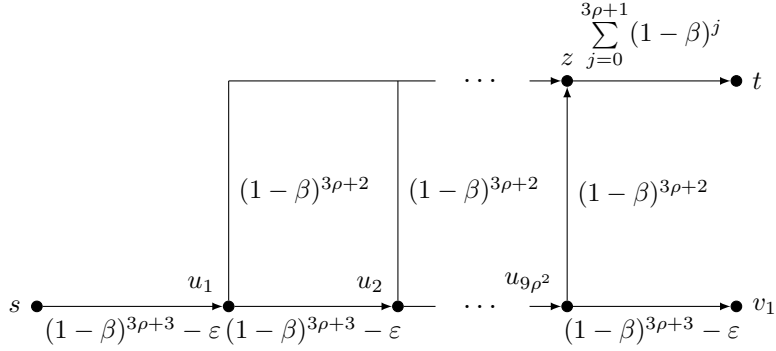
\begin{figure}[t]
\center
		\begin{tikzpicture}[nst/.style={draw,circle,fill=black,minimum size=4pt,inner sep=0pt]}, est/.style={draw,>=latex,->}]
			
			\node[nst] (v1) at (-0.3,0) [label=left:\small{$s$}] {};
			\node[nst] (v2) at (2.25,0) [label=above left:\small{$u_1$}] {};
			\node[nst] (v3) at (4.5,0) [label=above left:\small{$u_2$}] {};
			\node[nst] (v4) at (6.75,0) [label=above left:\small{$u_{9\rho^2}$}] {};
			\node[nst] (v5) at (6.75,3) [label=above:\small{$z$}] {};
			\node[nst] (v6) at (9,3) [label=right:\small{$t$}] {};
			\node[nst] (v7) at (9,0) [label=right:\small{$v_1$}] {};
			
			\node at (5.625,0) {$\dots$};
			\node at (5.625,3) {$\dots$};	
		
			\path (v1) edge[est] node [below] {\small{$(1-\beta)^{3\rho+3}-\varepsilon$}} (v2)
			(v2) edge[est] node [below] {\small{$(1-\beta)^{3\rho+3}-\varepsilon$}} (v3)
			(v3) edge (5,0)
			(6.25,0) edge[est] (v4)
			(v2) edge node [right] {\small{$(1-\beta)^{3\rho+2}$}} (2.25,3)
			(v3) edge node [right] {\small{$(1-\beta)^{3\rho+2}$}} (4.5,3)
			(v4) edge[est] node [right] {\small{$(1-\beta)^{3\rho+2}$}} (v5)
			(2.25,3) edge (5,3)
			(6.25,3) edge[est] (v5)
			(v5) edge[est] node [above] {\small{$\sum\limits_{j=0}^{3\rho+1}(1-\beta)^j$}} (v6)
			(v4) edge[est] node [below] {\small{$(1-\beta)^{3\rho+3}-\varepsilon$}} (v7);
			
		\end{tikzpicture}
\caption{The amplification unit of $G$.}\label{fig:g3}
\end{figure}

Next we describe the amplification unit, which is shown in Figure~\ref{fig:g3}. 
Starting at vertex $s$, there is a directed path to $u_{9\rho^2}$, called the {\em amplification path}, that consists of intermediate vertices $u_1, \ldots, u_{9\rho^2-1}$.
Each edge of the amplification path has a cost of $(1-\beta)^{3\rho+3}-\varepsilon$. 
From every $u_i$, there is also an edge to a vertex $z$ of cost $(1-\beta)^{3\rho+2}$.
Vertex $z$ is connected to $t$ via an edge of cost $\sum_{j=0}^{3\rho+1} (1-\beta)^j$.
In the following we prove the statements given in (a) and (b) above.

(a)~Assume that ${\cal I}$ has a solution. 
Let $G'$ be the subgraph of $G$ obtained by deleting all edges from the embedding of $H$ that do not belong to one of the $k$ vertex-disjoint paths in a fixed solution of ${\cal I}$. 
We claim that $G'$ is motivating with reward $r=1/\beta$.
Remember that the agent perceives $r$ as $1$ on all vertices of $G$ except for $t$.
Furthermore, we will use Lemma~\ref{lem:geoseries} to calculate the agent's perceived cost if not stated explicitly otherwise.
Let $u_0=s$. 
At every vertex $u_i$, with $i < 9\rho^2$, the agent's perceived cost for moving to $u_{i+1}$ and then directly to $t$, hence traversing edges $(u_i,u_{i+1})$, $(u_{i+1},z)$ and $(z,t)$, is $1-\varepsilon$.
Conversely, when residing at $u_i$, with $i \leq 9\rho^2$, the agent's perceived cost in moving directly to $t$ using edges $(u_{i+1},z)$ and $(z,t)$ is $1$.
Thus the agent moves along the edges of the amplification path, until it reaches $u_{9\rho^2}$.
At $u_{9\rho^2}$ the agent moves on to $u_1$ because its perceived cost in traversing $(u_{9\rho^2},v_1)$ and then following the first shortcut, which starts at $v_1$, is $1-\varepsilon$.
Similarly, when located at $v_i$, with $1\leq i<k$, agent perceives cost of $1-\varepsilon$ for traversing $(v_i,v_{i+1})$ and then taking the $(i+1)$-st shortcut.
In contrast, planning a path along $(v_i,w_i)$, with $1\leq i\leq k$, has a cost of $1$.
Thus the agent follows the main path until reaching $v_{k}$. 
By the same calculations, the agent moves to $v_{k+1}$. 
At this point the agent has no other option but to follow the main path.
Because the agent's perceived cost is $1$ for all vertices $v_{i}$, with $k<1\leq k+3\rho+3$, it eventually reaches $t$.

(b)~Suppose that ${\cal I}$ does not have a solution and consider any subgraph $G'$ of $G$. 
We first argue that if the agent leaves the amplification path or main path, then it abandons given a reward of at most $\rho/\beta$, which is perceived as $\rho$ at every vertex different from $t$.
If the agent leaves at some $u_i$, it must pass $z$ where it perceives cost of $\sum_{j=0}^{3\rho+1} (1-\beta)^j$.
However, it holds that
\[\sum_{j=0}^{3\rho+1} (1-\beta)^j = \sum_{j=0}^{3\rho+1} \biggl(1 - \frac{1}{3\rho+3}\biggr)^j > \sum_{j=0}^{3\rho+1} \biggl(1 - \frac{1}{3\rho+3}\biggr)^{3\rho+3} > \sum_{j=0}^{3\rho+1} \frac{1}{3} > \rho,\]
see Lemma~\ref{lem:rho} for the second inequality, and therefore the agent abandons. 
Similarly, if the agent leaves at some $v_i$, with $1 \leq i \leq k$, then it must pass one of the vertices $t_j$, where the perceived cost is greater than $\sum_{j=0}^{3\rho} (1-\beta)^j$ and consequently also greater than $\rho$.
Hence, we will restrict ourselves of subgraphs $G'$ in which the amplification path and main path are intact and assume that the agent walks them.

We say that the $i$-th shortcut is {\em degenerate\/} if the cost of a cheapest path from $v_i$ to $t$ via $(v_i,w_i)$ is different from the target value $\theta = \sum_{j=0}^{3\rho+2} (1-\beta)^j$.
In particular, the $i$-th shortcut is degenerate if there is no such path.
Note that by construction, every degenerate shortcut must miss the target value by $(1-\beta)^{3\rho+1}/(k+1)$ or more.
As in the proof of Theorem~\ref{th:comp}, the assumption that ${\cal I}$ has no solution implies the existence of a degenerate shortcut.
By the same argument given in Theorem~\ref{th:comp}, it is also clear that if there is a degenerate shortcut of cost less than $\theta$, the agent leaves the main path and abandons.
In the remainder of the analysis of (b) we therefore assume that for all shortcuts $i$, the cost of a cheapest path from $v_i$ to $t$ via $(v_i, w_i)$ is greater than $\theta$. 
We distinguish two cases depending on whether or not the first shortcut is degenerate. 

If the first shortcut is not degenerate, then there exists an integer $i$, with $1<i\leq k$, such that the $(i-1)$-st shortcut is not degenerate but shortcut $i$ is.
When the agent resides at $v_{i-1}$ and plans a cheapest path along $(v_{i-1}, w_{i-1})$, it perceives a cost of $1$. 
In contrast, traversing $(v_{i-1},v_i)$ and taking the next shortcut $i$, has a perceives a cost of at least 
\[(1-\beta)^{3\rho+3} -\varepsilon + \beta\biggl(\sum_{j=0}^{3\rho+2} (1-\beta)^j + \frac{(1-\beta)^{3\rho+1}}{k+1}\biggr)= 1 + \beta\frac{(1-\beta)^{3\rho+1}}{k+1}-\varepsilon >1.\]
The inequality holds by the choice of $\varepsilon$.
Moreover, there are no degenerate shortcuts of cost less than $\theta$.
Thus traversing $(v_{i-1},v_i)$ and continuing further on the main path, possibly taking a subsequent shortcut, the agent's perceived cost is at least 
\[(1-\beta)^{3\rho+3} -\varepsilon + \beta\biggl((1-\beta)^{3\rho+3} -\varepsilon + \sum_{j=0}^{3\rho+2} (1-\beta)^j\biggr) = 1+ (1+\beta)\biggl(\beta\frac{(1-\beta)^{3\rho+3}}{1 + \beta}- \varepsilon\biggr)>1.\]
Again, the inequality holds by choice of $\varepsilon$. 
Thus the agent leaves the main path at $v_{i-1}$ and abandons.

Finally, we study the case that the first shortcut is degenerate with cost of a cheapest path from $v_1$ to $t$ via $(v_1,w_1)$ greater than $\theta$.
Let $i$ be highest index of a vertex on the amplification path such that $u_i$ is connected to $t$ via $(u_i,z)$ and $(z,t)$ in $G'$.
The perceived cost of such a path is $1$.
Conversely, any path along $(u_i,u_{i+1})$, or $(u_{9\rho^2},v_1)$ assuming $i= 9\rho^2$, has a perceived cost greater than $1$ as calculated in the last paragraph. 
Thus the agent leaves the amplification path and abandons. 
However, if no $u_i$ is connected to $t$ via $(u_i,z)$ and $(z,t)$, then the agent's lowest perceived cost at $s$ is lower bounded by
\[(1-\beta)^{3\rho+3}-\varepsilon +\beta\biggl(9\rho^2\bigl((1-\beta)^{3\rho+3}-\varepsilon\bigr)+\sum_{j=0}^{3\rho+2} (1-\beta)^j\biggr) = 1 - \varepsilon + 9\beta\rho^2\bigl((1-\beta)^{3\rho+3}-\varepsilon\bigr)\]
Taking into account that $\beta = 1/(3\rho+3)$, we can further simplify this term to
\[1 - \varepsilon + 9\rho^2\frac{1/3 + ((1-\beta)^{3\rho+3} - 1/3 -\varepsilon)}{3\rho+3} > 1 - \varepsilon + 9\rho^2\frac{1/3}{3\rho+3} = \rho + \biggl(\frac{1}{1+\rho} - \varepsilon \biggr) > \rho.\]
Once more, the two inequalities hold by choice of $\varepsilon$.
Hence $G'$ is not motivating for a reward of at most $\rho/\beta$.

In order to finish the proof of the theorem we have to determine $\rho$. 
We set $\rho = m$, where $m$ is the number of vertices in $H$. 
The total number of vertices in $G$ is $n=2 + (9m^2+1)+(m+2k+3m+3)$. 
The first term accounts for $s$ and $t$, the following bracket accounts for the number of vertices in the amplification unit and the last bracket accounts for the number of vertices in the central unit.
Thus we have presented a polynomial-time reduction.
Moreover, it holds that $n\leq 9m^2+6m+6 < 16m^2$ for every $m\geq2$, which means that $\rho$ is lower bounded by $1/4 \sqrt{n}$.
\end{proof}


\section{Motivation through intermediate rewards}\label{sec:ir}
In this section, we generalize the original model of Kleinberg and Oren~\cite{KO} to incorporate the placement of rewards on arbitrary vertices instead of just $t$. 
The goal is to minimize the total value of the rewards collected by the agent as it travels from $s$ to $t$. 
The problem of finding an optimal reward configuration can be solved in polynomial-time if $\beta=0$ or $\beta =1$.
We prove that, for general $\beta\in(0,1)$, the corresponding decision problem, which we call MOTIVATING REWARD CONFIGURATION (MRC), is NP-complete. 
Furthermore, we show that the optimization version MRC-OPT, is NP-hard to approximate within any ratio greater or equal to $1$.

\begin{proposition}\label{prop:mrc}
An optimal reward configuration can be found in polynomial time for $\beta=0$ or $\beta=1$. 
\end{proposition}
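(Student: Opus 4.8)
The plan is to dispose of the two boundary values separately, using the fact that for $\beta\in\{0,1\}$ the agent's perceived cost $\zeta_r(\cdot)$ collapses to a form simple enough that an optimal configuration can be read off a single shortest-path or reachability computation in $G$.

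Consider $\beta=0$ first. In the defining expression $\zeta_r(v)=\min\{c(v,w)+\beta(d_r(w)-r(w))\mid(v,w)\in E\}$ the $\beta$-term disappears, so $\zeta_r(v)=\min\{c(v,w)\mid(v,w)\in E\}$; this, and likewise the set of cost-minimizing outgoing edges the agent may traverse, does not depend on $r$ at all. Since costs are non-negative, the agent abandons at a vertex $v\neq t$ precisely when $v$ has no outgoing edge of cost $0$. Hence every configuration induces the same family of possible walks, so $G$ admits a motivating configuration if and only if it admits the all-zero one, which happens exactly when in the subgraph $G_0$ of cost-$0$ edges every vertex reachable from $s$ is either $t$ or has an outgoing edge --- equivalently, the only sink of $G_0$ reachable from $s$ in $G_0$ is $t$. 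This is decided by one graph search; if it holds, the all-zero configuration is motivating with collected reward $0$, which is optimal as rewards are non-negative, and otherwise no configuration is motivating.

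Now take $\beta=1$. The same expression gives
\[\zeta_r(v)=\min\{c(v,w)-r(w)+d_r(w)\mid(v,w)\in E\}=\min\{c_r(v,w)+d_r(w)\mid(v,w)\in E\}=d_r(v),\]
so the agent's perceived cost equals its true shortest remaining cost under $c_r$, and it steps along $(v,w)$ only if $c_r(v,w)+d_r(w)=d_r(v)$. If $G$ has no $s$-$t$ path then $d_r(s)=\infty$ for every $r$ and no configuration works; otherwise put $D=d_G(s)$, computable in polynomial time since $G$ is acyclic. For the upper bound I would use the configuration with $r(t)=D$ and $0$ elsewhere: then $c_r$ agrees with $c$ except on the edges into $t$, so $d_r(v)=d_G(v)-D$ for every $v\neq t$; the agent follows a cheapest $c$-path from $s$ to $t$, and every vertex $v$ on such a path has $d_r(v)=d_G(v)-D\le 0$ because $d_G(v)\le d_G(s)$, so the agent never abandons, reaches $t$, and collects exactly $D$. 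For the matching lower bound, let $r$ be any motivating configuration and $s=v_0,v_1,\dots,v_k=t$ any walk the agent may realize; from $c_r(v_i,v_{i+1})=d_r(v_i)-d_r(v_{i+1})$ the $c_r$-cost of the walk telescopes to $d_r(s)$, which is $\le 0$ because the agent does not abandon at $s$. Rewriting $c_r$, this says $\sum_{i=0}^{k-1}c(v_i,v_{i+1})\le\sum_{i=1}^{k}r(v_i)$, and the left-hand side is at least $d_G(s)=D$ since $v_0\dots v_k$ is an $s$-$t$ path; hence the reward $\sum_{i=0}^{k}r(v_i)$ collected on this walk is at least $D$. So the optimum budget is exactly $D$, attained by the configuration above.

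I expect no real obstacle: the substance is verifying the two collapses of $\zeta_r$ and, for $\beta=1$, handling the agent's re-optimization at every vertex carefully --- it is the telescoping identity $c_r(\text{walk})=d_r(s)$ that both certifies the explicit configuration is motivating (not merely cheap) and forces the collected reward to dominate $d_G(s)$. The only point worth stating explicitly is that optimality must hold over \emph{all} tie-break walks, which is automatic here since for $\beta=0$ the walk family is $r$-independent and for $\beta=1$ the lower-bound argument applies to each such walk verbatim.
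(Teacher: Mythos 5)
Your proposal is correct and follows essentially the same route as the paper: for $\beta=0$ you observe that rewards are invisible to the agent, so feasibility reduces to a cost-$0$ reachability check with optimal budget $0$, and for $\beta=1$ you use the same configuration $r(t)=d_G(s)$ and the collapse $\zeta_r(v)=d_r(v)$ to show the agent follows a cheapest path without abandoning. Your telescoping argument $\sum_i c_r(v_i,v_{i+1})=d_r(s)\le 0$ merely makes explicit the optimality lower bound that the paper asserts without detail, so it is a welcome refinement rather than a different approach.
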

\begin{proof}
First suppose that $\beta=0$.
In this case the agent does not perceive rewards placed at any vertex of $G$ and will only traverse edges of cost~$0$. 
Consider the subset $V'$ of the vertices that can be reached from $s$ on a path of cost~$0$.
The agent will definitely travel from $s$ to $t$ and not abandon if and only if $t$ can be reached from every vertex of $V'$ on a path of cost~$0$.
Because no rewards need to be placed in this scenario, the optimal budget is $b=0$.
Next assume $\beta=1$.
When the agent is at $s$, its lowest perceived cost for moving from $s$ to $t$ is equal to $d(s)$.
Setting $r(t) = d(s)$ yields a motivating and also optimal reward configuration.
This holds true because the agent travels from $s$ to $t$ along the edges of a cheapest path and its lowest perceived cost do not increase on its walk. 
\end{proof}

We now formally introduce the decision problem MRC.

\begin{definition}[MOTIVATING REWARD CONFIGURATION]
Given a task graph $G$, a non-negative budget $b$ and bias factor $\beta\in[0,1]$, decide the existence of a motivating reward configuration $r$, with $r(v) \geq 0$ for all vertices of $G$, such that the total reward collected on any of the agent's walks is at most $b$.
\end{definition}

The following proposition establishes membership of MRC in NP.

\begin{proposition}\label{prop:3}
For any task graph $G$, reward configuration $r$ and bias factor $\beta$, it is possible to decided in polynomial-time if $r$ is motivating within a given budget of $b$.
\end{proposition}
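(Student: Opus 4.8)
The plan is to follow the argument behind Proposition~\ref{prop:2}, adapting it to the modified cost metric $c_r$. First I would compute, for every vertex $w$, the value $d_r(w)$ of a cheapest path from $w$ to $t$ under the edge costs $c_r(v,w)=c(v,w)-r(w)$. These costs may be negative, but $G$ is a directed acyclic graph, so $d_r$ is well defined (no negative cycles can occur) and can be obtained in polynomial time by processing the vertices in reverse topological order, setting $d_r(w)=\infty$ whenever no path from $w$ to $t$ exists. From these values I would then compute the perceived cost $\zeta_r(v)=\min\{c(v,w)+\beta(d_r(w)-r(w))\mid(v,w)\in E\}$ at every vertex, again in polynomial time.

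Next, exactly as in Proposition~\ref{prop:2}, I would build a pruned copy $G'$ of $G$ by deleting every edge $(v,w)$ with $c(v,w)+\beta(d_r(w)-r(w))>\zeta_r(v)$, i.e.\ keeping only the edges the agent is willing to traverse. The structural observation, identical to the one used for Proposition~\ref{prop:2}, is that the set of vertices reachable from $s$ in $G'$ is precisely the set of vertices the agent may occupy on some walk, and the maximal paths out of $s$ in $G'$ are precisely the possible agent walks. I would then check two conditions: (i) $\zeta_r(v)\le 0$ for every vertex $v$ reachable from $s$ in $G'$; and (ii) the maximum of $\sum_{v\in V(P)}r(v)$ over all $s$-$t$ paths $P$ in $G'$ is at most $b$. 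Condition (i) is decided by a single graph search, and condition (ii) is a longest-path computation in the DAG $G'$ with non-negative vertex weights $r(v)$, hence also polynomial-time solvable.

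It then remains to argue correctness, namely that $r$ is motivating within budget $b$ if and only if both (i) and (ii) hold. If a vertex $v\neq t$ has no outgoing edge, or all of its successors $w$ satisfy $d_r(w)=\infty$, then $\zeta_r(v)=\infty$; hence condition (i) in particular forbids the agent from getting stuck before $t$, and whenever (i) holds, every reachable vertex lies on a $c_r$-finite path to $t$ inside $G'$, the chosen edge always leads to a vertex from which $t$ is still reachable, and since $G'$ is finite and acyclic the agent necessarily arrives at $t$ on every walk. Thus (i) is equivalent to ``the agent never abandons''. Given (i), the possible agent walks are exactly the $s$-$t$ paths of $G'$, so ``the total reward collected on every walk is at most $b$'' is exactly condition (ii). Since all ingredients are computable in polynomial time, the proposition follows.

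The delicate point is not any single computation but the bookkeeping around infinite values and the abandonment threshold: one has to verify carefully that ``the agent reaches $t$ on all walks without abandoning'' is genuinely equivalent to condition (i) — in particular that the value of $\zeta_r$ at vertices unreachable from $s$ is irrelevant, that finiteness of $\zeta_r$ propagates finiteness of $d_r$ along a walk, and that acyclicity of $G'$ then forces termination at $t$. Everything beyond this is a direct transcription of the proof of Proposition~\ref{prop:2}.
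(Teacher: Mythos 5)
Your proposal is correct and follows essentially the same route as the paper's proof: compute $\zeta_r$, prune every edge that does not minimize the agent's perceived cost to obtain $G'$, check $\zeta_r(v)\le 0$ on all vertices reachable from $s$, and verify the budget by a longest-path computation in the acyclic pruned graph (the paper puts the reward $r(w)$ on each edge $(v,w)$ and adds $r(s)$ separately, which is the same quantity as your vertex-weighted sum). Your extra bookkeeping about infinite values and termination just makes explicit what the paper leaves implicit.
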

\begin{proof}
Similar to Proposition~\ref{prop:2}, we modify $G$ in the following way.
First, we compute the lowest perceived cost $\zeta_r(v)$ for all vertices in $G$.
Next we take a copy of $G$, say $G'$, and remove all edges from $G'$ that do not minimize the agents perceived cost in the original graph $G$.
More formally, we remove all edges $(v,w)$ for which $\zeta_r(v) < c_r(c) + \beta(d_r(w)-r(w))$.
The given reward configuration $r$ is motivating if and only if $\zeta_r(v) \leq 0$ for all $v$ that can be reached from $s$ in $G'$.
This condition can be checked in polynomial-time using graph search algorithms.
To determine if the budget constraint is satisfied, we assign all edges $(v,w)$ of $G'$ a cost of $r(w)$.
Let $c$ be the maximum cost among all paths from $s$ to $t$ in $G'$ according to these prices.
Because $G'$ is acyclic, $c$ can be computed in polynomial-time.
Thus $r$ is within budget if and only if $c + r(s) \leq b$.
\end{proof}

Our NP-completeness proof of MRC relies on a reduction form SET PACKING (SP)~\cite{GJ}.
For convenience, we restate SP in the following Definition.

\begin{definition}[SET PACKING]
Given a collection $S_1, \ldots, S_\ell$ of finite sets and an integer $k\leq \ell$, decide if there exist at least $k$ mutually disjoint sets?.
\end{definition}

We are now ready to prove NP-completeness of MRC.

\begin{theorem}\label{th:comp2}
MRC is NP-complete, for any bias factor $\beta \in (0,1)$, even if $b=0$.
\end{theorem}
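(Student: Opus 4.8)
The plan is to reduce from SET PACKING (SP). Membership in NP is immediate from Proposition~\ref{prop:3}: a motivating reward configuration together with the relevant budget check is a polynomial-size certificate. For hardness, given an SP instance with sets $S_1,\dots,S_\ell$ over a ground set $U=\bigcup_j S_j$ and a target $k$, I would build a task graph $G$ and choose $\beta\in(0,1)$ with polynomial encoding length so that $G$ admits a motivating reward configuration collecting total reward $0$ if and only if at least $k$ of the $S_j$ are pairwise disjoint. The zero-budget constraint is the delicate part: since the agent must be motivated but may never actually bank any placed reward, I want to exploit \emph{exploitative} configurations of the kind illustrated in Section~\ref{sec:model}, where a reward sitting on a vertex the agent never visits still lowers the perceived cost of a path that threatens to pass through it.

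\textbf{The gadget.} I would route the agent along a main path from $s$ to $t$ that forces it, at a sequence of $k$ decision vertices, to ``commit'' to picking up $k$ of the sets; each set $S_j$ is represented by a small sub-path, and each ground-set element $x\in U$ gets its own vertex $e_x$ carrying a large reward $r(e_x)>0$, placed off the path the agent will actually follow. The key design trick (mirroring the $\zeta_r(v)=\min\{c(v,w)+\beta(d_r(w)-r(w))\}$ mechanics): at a decision vertex the agent looks ahead and is only willing to continue if, along \emph{some} continuation, the discounted net cost is $\le 0$; I will arrange edge costs so that the only continuations with non-positive perceived cost are those that plan to pass through the element-vertices of a chosen set $S_j$, collecting its rewards in the plan. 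But then, one step later, an alternative short edge (of small true cost, analogous to the $(v,t)$ edge in the car-wash/exploitation example) becomes preferable, so the agent re-plans, skips the element-vertices, and collects nothing. Crucially, two sets that share an element must not both be ``claimable'' along disjoint continuations — I would enforce this by making the element-vertex $e_x$ lie on the look-ahead path of every set containing $x$, and by giving $e_x$ only enough reward to ``subsidize'' one traversal: if two committed sets both need to route their look-ahead through $e_x$, the double-counting is impossible because on any single path $e_x$ is visited at most once, so its reward $r(e_x)$ is credited only once in $d_r$, breaking the motivation for the second set. Thus a valid zero-budget motivating configuration exists exactly when the $k$ committed sets can be chosen pairwise disjoint, i.e. when SP has a solution. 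Lemma~\ref{lem:geoseries} will again be used to calibrate the cost sequence along each look-ahead sub-path so that the ``perceived cost exactly $0$'' thresholds line up cleanly, and a small $\varepsilon$ slack will break ties so that the agent strictly prefers the exploitative short edge and hence collects nothing.

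\textbf{Correctness sketch.} For the forward direction, from $k$ disjoint sets $S_{j_1},\dots,S_{j_k}$ I place reward $r(e_x)$ on exactly the element-vertices of those sets (sized by the geometric-series calibration) and zero elsewhere; I verify that at each decision vertex the planned continuation through the next chosen set has perceived cost $\le 0$, so the agent proceeds, and that one step later it diverts onto the short edge, so the realized walk touches no rewarded vertex and the collected total is $0$. For the converse, I argue that in \emph{any} motivating configuration the agent must make it past all $k$ decision vertices, which forces $k$ distinct sets to be ``supported'' by the rewards, and that if two supported sets overlapped in an element $x$ then the single credit of $r(e_x)$ on any look-ahead path cannot simultaneously make both continuations non-positive — so the supported sets are pairwise disjoint and SP is satisfied; the budget being $0$ is automatic since nothing placed is on the agent's walk. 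Finally, since $\beta=0$ and $\beta=1$ are polynomial by Proposition~\ref{prop:mrc}, and the construction is polynomial in the SP instance, this establishes NP-completeness of MRC for every fixed $\beta\in(0,1)$ even with $b=0$.

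\textbf{Main obstacle.} The hard part will be making the ``shared element forbids two committed sets'' argument airtight in the presence of edge removals being \emph{disallowed} here (unlike MS): the adversary designing the configuration has a lot of freedom in where to place positive rewards, so I must ensure the element-vertex reward structure genuinely caps the subsidy at one traversal and that no clever off-path reward placement can substitute for a missing disjointness — getting the inequalities on $\varepsilon$ and the geometric weights to simultaneously enforce ``motivated iff a continuation plans through a chosen set's elements'' and ``realized collection $=0$'' is where the real care is needed.
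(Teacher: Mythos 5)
Your high-level frame (reduction from SET PACKING, NP membership via Proposition~\ref{prop:3}, exploitative configurations in which the agent is baited by rewards it never collects, calibration via Lemma~\ref{lem:geoseries} and an $\varepsilon$ slack) matches the paper, but the mechanism you propose for enforcing disjointness has a genuine gap. You argue that if two committed sets share an element $x$, then ``$r(e_x)$ is credited only once in $d_r$, breaking the motivation for the second set.'' This conflates a single planned path with the agent's repeated re-planning: the perceived costs at two different decision vertices are computed on two \emph{separate} look-ahead paths at two separate moments, and an uncollected reward at $e_x$ lowers the value of $d_r$ in both computations independently. There is no conservation law limiting how many times an unclaimed reward can serve as bait --- that is precisely what makes exploitative configurations powerful --- so placing a single large reward on $e_x$ can motivate the commitments to \emph{both} overlapping sets, and your claimed equivalence with SET PACKING breaks in the converse direction. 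Relatedly, in that converse direction you must rule out that the configuration designer simply places large rewards on some off-walk vertices unrelated to elements; with per-element bait vertices alone your cost structure does not force every motivating zero-collection configuration to encode a choice of pairwise disjoint sets.

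The paper resolves exactly this difficulty by encoding intersections in the \emph{graph topology} rather than in shared reward vertices. Each set $S_j$ at level $i$ has its own bait vertex $w_{i,j}$ reached by a shortcut of cost $1$ (then cost $0$ to $t$); a reward of about $(1-\varepsilon)/\beta$ on $w_{i,j}$ marks $S_j$ as selected at level $i$. Crucially, there is a cheap ``downward path'' from $v_{i,j}$ to $w_{i',j'}$ (cost $0$ then $(1-\beta-k\varepsilon)/(\beta-\beta^2)$) exactly when $S_j\cap S_{j'}\neq\emptyset$. A backwards induction shows that in any zero-budget motivating configuration the agent's perceived cost at the vertices it actually visits is at least $(i-k)\varepsilon$, that the only motivating continuations pass through next-level shortcuts carrying reward at least $(1-\varepsilon)/\beta$, and hence that if two selected sets intersected, the downward path from the higher selected vertex to the lower rewarded shortcut would have perceived cost strictly below $(1-k)\varepsilon$, so the agent would take it and claim the reward, violating $b=0$. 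So disjointness is forced by making an overlap create a temptation the agent would actually act on, not by capping how often a reward can be ``credited.'' To repair your proof you would need an analogous structural device (or some other argument) in place of the single-credit claim, plus the converse-direction argument pinning down where useful rewards can sit; the budget-$b>0$ extension (appending an edge of cost $\beta b$ into a new target) is then a minor addendum, as in the paper.
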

\begin{proof}
By Proposition~\ref{prop:3}, we can take any motivating reward configuration $r$ within budge $b$ as certificate for a "yes"-instance of MRC.
Hence MRC is in NP.
In the following we present a polynomial-time reduction from SP to MRC.
We focus on the case that $b=0$.
At the end of the proof we show how to modify the reduction to handle arbitrary values $b>0$. 

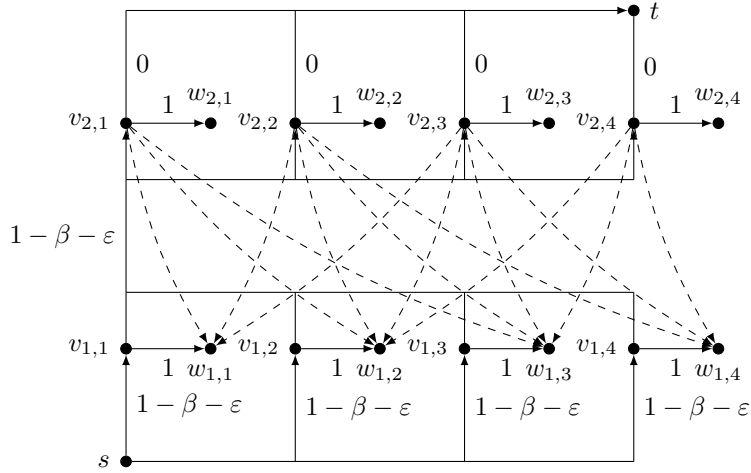
\begin{figure}[t]
\center
		\begin{tikzpicture}[nst/.style={draw,circle,fill=black,minimum size=4pt,inner sep=0pt]}, est/.style={draw,>=latex,->}]
			
			\node[nst] (v1) at (0,0) [label=left:\small{$s$}] {};	
			\node[nst] (v2) at (0,1.5) [label=left:\small{$v_{1,1}$}] {};
			\node[nst] (v3) at (2.25,1.5) [label=left:\small{$v_{1,2}$}] {};
			\node[nst] (v4) at (4.5,1.5) [label=left:\small{$v_{1,3}$}] {};
			\node[nst] (v5) at (6.75,1.5) [label=left:\small{$v_{1,4}$}] {};
			\node[nst] (v6) at (0,4.5) [label=left:\small{$v_{2,1}$}] {};
			\node[nst] (v7) at (2.25,4.5) [label=left:\small{$v_{2,2}$}] {};
			\node[nst] (v8) at (4.5,4.5) [label=left:\small{$v_{2,3}$}] {};
			\node[nst] (v9) at (6.75,4.5) [label=left:\small{$v_{2,4}$}] {};
			\node[nst] (v10) at (6.75,6) [label=right:\small{$t$}] {};
			\node[nst] (v11) at (1.125,1.5) [label=below:\small{$w_{1,1}$}] {};
			\node[nst] (v12) at (3.375,1.5) [label=below:\small{$w_{1,2}$}] {};
			\node[nst] (v13) at (5.625,1.5) [label=below:\small{$w_{1,3}$}] {};
			\node[nst] (v14) at (7.875,1.5) [label=below:\small{$w_{1,4}$}] {};
			\node[nst] (v15) at (1.126,4.5) [label=above:\small{$w_{2,1}$}] {};
			\node[nst] (v16) at (3.375,4.5) [label=above:\small{$w_{2,2}$}] {};
			\node[nst] (v17) at (5.625,4.5) [label=above:\small{$w_{2,3}$}] {};
			\node[nst] (v18) at (7.875,4.5) [label=above:\small{$w_{2,4}$}] {};
			
			
			\path (v1) edge (6.75,0)
			(v1) edge[est] node [right] {\small{$1-\beta-\varepsilon$}} (v2)
			(2.25,0) edge[est] node [right] {\small{$1-\beta-\varepsilon$}} (v3)
			(4.5,0) edge[est] node [right] {\small{$1-\beta-\varepsilon$}} (v4)
			(6.75,0) edge[est] node [right] {\small{$1-\beta-\varepsilon$}} (v5)
			(v2) edge (0,2.25)			
			(v3) edge (2.25,2.25)
			(v4) edge (4.5,2.25)
			(v5) edge (6.75,2.25)
			(0,2.25) edge (6.75,2.25)
			(0,2.25) edge node [left] {\small{$1-\beta-\varepsilon$}} (0,3.75)
			(0,3.75) edge (6.75,3.75)
			(0,3.75) edge[est] (v6)
			(2.25,3.75) edge[est] (v7)
			(4.5,3.75) edge[est] (v8)
			(6.75,3.75) edge[est] (v9)
			(v6) edge node [right] {\small{$0$}} (0,6)
			(v7) edge node [right] {\small{$0$}} (2.25,6)
			(v8) edge node [right] {\small{$0$}} (4.5,6)
			(v9) edge node [right] {\small{$0$}} (v10)
			(0,6) edge[est] (v10)
			(v2) edge[est] node [below] {\small{$1$}} (v11)
			(v3) edge[est] node [below] {\small{$1$}} (v12)
			(v4) edge[est] node [below] {\small{$1$}} (v13)
			(v5) edge[est] node [below] {\small{$1$}} (v14)
			(v6) edge[est] node [above] {\small{$1$}} (v15)
			(v7) edge[est] node [above] {\small{$1$}} (v16)
			(v8) edge[est] node [above] {\small{$1$}} (v17)
			(v9) edge[est] node [above] {\small{$1$}} (v18)
			(v6) edge[est,bend right=10, dashed] (v11)
			(v6) edge[est,bend right=10, dashed] (v12)
			(v6) edge[est,bend right=10, dashed] (v13)
			(v7) edge[est,bend left=10, dashed] (v11)
			(v7) edge[est,bend right=10, dashed] (v12)
			(v7) edge[est,bend right=10, dashed] (v13)
			(v7) edge[est,bend right=10, dashed] (v14)
			(v8) edge[est,bend left=10, dashed] (v11)
			(v8) edge[est,bend left=10, dashed] (v12)
			(v8) edge[est,bend right=10, dashed] (v13)
			(v8) edge[est,bend right=10, dashed] (v14)
			(v9) edge[est,bend left=10, dashed] (v12)
			(v9) edge[est,bend left=10, dashed] (v13)
			(v9) edge[est,bend right=10, dashed] (v14);

		\end{tikzpicture}
\caption{The task graph for $S_1=\{a,c,d\}$, $S_2=\{a,b\}$, $S_3=\{b,c,e\}$, $S_4 = \{b,e\}$ and $k=2$.}\label{fig:sp}
\end{figure}

Let ${\cal I}$ be an arbitrary instance of SP, consisting of finite sets $S_1, \ldots, S_\ell$ and an integer $k\leq \ell$.
We start by constructing the task graph $G$ of an MRC instance ${\cal J}$. 
Figure~\ref{fig:sp} depicts $G$ for a small sample instance of SP. 
In general, $G$ consists of a source $s$, a target $t$ and $k\ell$ vertices $v_{i,j}$, where $1\leq i \leq k$ and $1\leq j\leq \ell$. 
Intuitively, if the agents visit $v_{i,j}$, then $S_j$ is the $i$-th set to be added to the solution of $\mathcal{I}$. 
For every $v_{i,j}$, with $i<k$, there is a directed edge to all vertices $v_{i+1,j'}$ on the next level. 
We call these edges {\em upward edges\/}. 
The cost of any such edge is $1-\beta -\varepsilon$. 
Here $\beta\in (0,1)$ is any value whose encoding length is polynomial in that of ${\cal I}$. 
Furthermore, $\varepsilon$ is a positive value satisfying 
\[\varepsilon < \min\biggl\{\frac{(1-\beta)^2}{k}, \frac{\beta-\beta^2}{k-1+\beta}\biggr\}.\]
In Figure~\ref{fig:sp} the upward edges are merged to maintain readability. 
From $s$ there is a directed edge to every vertex $v_{1,j}$ on the bottom level.
Again, the cost of all such edges is $1-\beta-\varepsilon$.
Finally, for every vertex $v_{k,j}$ on level $k$ there exists an edge of cost~$0$ to $t$. 

In order guide the agent onto a specific upward edge, we add {\em shortcuts\/} to $G$ that connect every $v_{i,j}$ to $t$ via an intermediate vertex $w_{i,j}$.
The first edge $(v_{i,j},w_{i,j})$ has cost~$1$ and the second edge $(w_{i,j},t)$ has cost~$0$. 
In Figure~\ref{fig:sp} the edges $(w_{i,j},t)$ are omitted for the sake of a readability.
As we will see, a reward of value less than~$1/\beta$ can be placed on $w_{i,j}$ and the agent will not claim it.
Finally we introduce {\em downward paths\/} of length two that connect each $v_{i,j}$ with all $w_{i',j'}$ for which $i'<i$ and $S_j\cap S_{j'}\neq \emptyset$, i.e.\ the sets are not disjoint.
The first edge has a cost of~$0$, while the second edge has a cost of $(1-\beta-k\varepsilon)/(\beta-\beta^2)$. 
In Figure~\ref{fig:sp} each downward path is drawn as a single dashed edge.
The purpose of these paths is to let the agent claim a reward or abandon whenever the disjointness constraints of ${\cal I}$ are violated.
Notice that $G$ is acyclic.
We will show that ${\cal I}$ has a solution if and only if ${\cal J}$ has one.

($\Longrightarrow$) First assume that ${\cal I}$ has a solution, i.e.\ there exist $k$ mutually disjoint sets among $S_1, \ldots, S_\ell$.
Fix such a selection of $k$ disjoint sets and assign each set to a distinct level of $G$. 
Suppose that $S_j$ of the collection is assigned to level $i$. 
Then place a reward of value $(1-\varepsilon)/\beta$ on $w_{i,j}$.
The corresponding shortcut from $v_{i,j}$ is referred to as an {\em active shortcut\/}. 
We now analyze the agent's walk through $G$ and show that it visits exactly the vertices $v_{i,j}$ at which the active shortcuts start.
The agent does not claim any reward.

Suppose that the agent is located at the initial vertex $v_{i,j}$, with $i < k$, of an active shortcut.
There are three options.
First the agent could follow the shortcut to $w_{i,j}$. 
However, the perceived cost along this path to $t$ is $\varepsilon$, so the agent has no incentive to move to $w_{i,j}$.
Secondly, the agent could take a downward path.
By construction, none of them leads to an active shortcut.
This means that the agent cannot collect any reward on such a path but encounters a positive cost on the downward path.
Hence this option is not motivating either.
The agent's only remaining option is to take an upward edge. 
If the agent plans to take the active shortcut of level $i+1$, it perceives a total cost of $0$. 
This is a motivating choice. 
Conversely, assume that the agent plans a path $P$ to $t$ that visits a vertex $v_{i+1,j'}$ on the next level such that the corresponding shortcut is not active.
We distinguish between four scenarios.
If $P$ includes a downward path, then at best one reward can be located along $P$ so that the agent's perceived cost is at least 
\[1-\beta -\varepsilon + \beta\biggl(\frac{1-\beta-k\varepsilon}{\beta-\beta^2} - \frac{1-\varepsilon}{\beta}\biggr) = \frac{k}{1-\beta}\biggl(\frac{(1-\beta)^2}{k} - \varepsilon\biggr) >0,\]
which is not motivating.
The inequality follows from our choice of $\varepsilon$.
If $P$ includes the shortcut at $v_{i+1,j'}$ then the agent encounters no reward but a positive cost, which is not motivating either.
If $P$ includes a shortcut on some level above $i+1$, then the agent must traverse at least two upward edges but can collect at most one reward.
In this case the agent's perceived cost is lower bounded by
\[1-\beta -\varepsilon + \beta \biggl((1-\beta -\varepsilon) + 1 - \frac{1-\varepsilon}{\beta}\biggr) = \beta(1-\beta-\varepsilon) > \beta\biggl(\frac{(1-\beta)^2}{k}-\varepsilon\biggr) >0.\]
As always, the last inequality follows by choice of $\varepsilon$.
Finally, $P$ may neither include a downward path nor a shortcut.
However, this means that the agent has positive cost and does not collect any reward.
All in all, the only motivating option is to take the upward edge leading to the active shortcut of level $i+1$.

The same arguments also apply when the agent is at $s$ or the initial vertex of an active shortcut on the top level. 
At $s$, the agents only option is to take an upward edge.
Hence it moves to the initial vertex of the active shortcut of the bottom level.
At the top level the agent will take the direct edge to $t$, which incurs no cost.
All other options, namely taking a downward path or the current shortcut, are not motivating.

($\Longleftarrow$) Next assume that ${\cal J}$ has a solution, i.e.\ there exists a motivating reward configuration such that the agent does not claim any reward.
Consider arbitrary walks of the agent.
A first crucial observation is that no such walk enters a shortcut or a downward path, because a positive reward on a vertex along these paths is needed to guide the agent onto them.
Considering that the agent cannot change its plan once it entered a shortcut or downward path, it would either claim the reward or abandon, which contradicts the assumption that ${\cal J}$ has a solution. 
Hence the agents visits one vertex $v_{i,j}$ at each level $i$. 
We call every $v_{i,j}$ that is contained in one of the agent's possible walks an {\em active vertices\/}.
Note that there might be more than one active vertex per level.

We next prove that at every active vertex the agent's lowest perceived cost at least $(1-k)\varepsilon$. 
More specifically, we show by backwards induction, from the top level down to the bottom level, that whenever the agent is located at an active vertex of level~$i$, its perceived cost in planing a path to $t$ is at least $(i-k)\varepsilon$. 
Moreover, if $i<k$, then the only motivating paths are paths passing through the shortcuts of active vertices on level~$i+1$. 
First assume that the agent is at an active vertex $v_{k,j}$ on the top level. 
As argued above, the agent cannot take the shortcut or downward path to $t$. 
However, the edge $(v_{k,j},t)$ is a motivating path with a perceived cost of $0$, which is equal to $(k-k)\varepsilon$.
This proves the basis of our induction.

For the inductive step, suppose that $i<k$ and that the agent is located at some active vertex $v_{i,j}$.
Let $v_{i+1,j'}$ be the active vertex visited next by the agent. 
Because the agent will move from $v_{i,j}$ to $v_{i+1,j'}$, there must exist a path $P$ from $v_{i,j}$ to $t$ via $(v_{i,j},v_{i+1,j'})$ that minimizes the agent's perceived cost.
We distinguish four scenarios.
First, assume that $i = k-1$  and $P$ contains $(v_{i+1,j'}, t)$.
This means that the agent receives no reward but has positive cost, which is not motivating.
Secondly, assume $P$ contains a forward edge leaving $v_{i+1,j'}$ and consider the perceived cost of the remaining portion of $P$ when viewed from $v_{i+1,j'}$.
By induction hypothesis this cost must be at least $((i+1)-k)\varepsilon$.
Furthermore, no reward must be placed at $v_{i+1,j'}$, as this would violate the budget.
This means that the perceived cost of $P$ at $v_{i,j}$ increases by $\beta (1-\beta -\varepsilon)$ when compared to the perceived cost of $P$ at $v_{i+1,j'}$.
Thus the perceived cost at of $P$ at $v_{i,j}$ is at least 
\[\bigl((i+1)-k\bigr)\varepsilon + \beta (1-\beta -\varepsilon) = (k-(i+1)+\beta)\biggl(\frac{\beta-\beta^2}{k-(i+1)+\beta}-\varepsilon\biggr)>(k-(i+1)+\beta)\biggl(\frac{\beta-\beta^2}{k-1+\beta}-\varepsilon\biggr)>0.\]
The last inequality holds by choice of $\varepsilon$.
Hence $P$ is not motivating.
Thirdly, assume $P$ contains a downward path out of $v_{i+1,j'}$.
In this case, the perceived cost of $P$ at $v_{i,j}$ compared to the perceived cost of $P$ at $v_{i+1,j'}$ increases by even more, namely $1-\beta -\varepsilon$.
Certainly, $P$ can not be motivating.  
Finally, assume $P$ contains the shortcut out of $v_{i+1,j'}$.
When viewed from $v_{i,j}$ instead of $v_{i+1,j'}$, the perceived cost of $P$ increases by $1-\beta -\varepsilon$ and decreases by $1-\beta$.
Thus the perceived cost is at least $((i+1)-k)\varepsilon -\varepsilon = (i-k)\varepsilon$ which concludes the induction step.
By a similar argument, the only motivating paths out of $s$ traverse the shortcut of an active vertex on the bottom level.

The last three paragraphs imply that for every active vertex $v_{i,j}$ a reward of at least $(1-\varepsilon)/\beta$ has to be placed at $w_{i,j}$.
Otherwise the shortcut would not be motivating when the agent resides at an active vertex on the previous level $i-1$, or at $s$ if $i=1$.
This implies that there can be no downward path connecting an active vertex $v_{i,j}$ to the shortcut of an active vertex on a lower level, because the perceived cost at $v_{i,j}$ for following the downward path would be at most
\[\beta\biggl(\frac{1-\beta-k\varepsilon}{\beta-\beta^2} - \frac{1-\varepsilon}{\beta}\biggr) = \frac{(1-k-\beta)\varepsilon}{1-\beta} < (1-k)\varepsilon.\]
Thus the active vertices $v_{i,j}$ along any walk of the agent correspond to $k$ disjoint sets $S_j$, which proves that ${\cal I}$ has a solution.

We finally address the case that the agent may collect a total reward of $b>0$. 
Consider a slightly modify version of $G$.
We rename $t$ by $t'$ and add an edge from $t'$ to a new target $t$. 
The cost of this edge is $\beta b$. 
The agent only reaches $t$ from $t'$ if a reward of value $b$ is placed on $t$. 
With this observation the above proof immediately carries over.
\end{proof}

We next turn to the optimization variant of MRC. 

\begin{definition}[MOTIVATING REWARD CONFIGURATION OPT]
Given a task graph $G$ and a bias factor $\beta\in(0,1)$, determine the minimum budget $b$ for which there exists a reward configuration $r$, with $r(v) \geq 0$ for all vertices of $G$,such that the total reward collected on any of the agent's walks is at most $b$.
\end{definition}

Assuming that ${\rm P}\neq{\rm NP}$, Theorem~\ref{th:comp2} implies that there exists no polynomial-time algorithm that approximates a motivating reward configuration such that the required budget is within any ratio greater of equal to $1$ compared to the budget of an optimal solution.
This, follows from the fact that MRC is NP-complete even in the special case that $b=0$.

\begin{corollary}\label{cor:approx}
MRC-OPT is NP-hard to approximated within any ratio greater of equal to $1$.
\end{corollary}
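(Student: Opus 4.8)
The plan is to obtain the inapproximability as an immediate consequence of the NP-completeness of MRC in the special case $b=0$, which is exactly the content of Theorem~\ref{th:comp2}. The guiding principle is that any multiplicative approximation guarantee degenerates to exactness whenever the optimum is zero, so the NP-hard ``$b=0$ versus $b>0$'' distinction is handed to us for free.

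First I would argue by contradiction: suppose that for some ratio $\alpha\geq1$ there is a polynomial-time algorithm $\mathcal{A}$ that, on input a task graph $G$ and a bias factor $\beta\in(0,1)$, outputs a non-negative reward configuration $r$ whose collected budget is at most $\alpha$ times the optimum of MRC-OPT on that instance. On any instance whose optimal budget equals $0$, algorithm $\mathcal{A}$ must then return a configuration collecting total reward at most $\alpha\cdot0=0$; since rewards are non-negative, this forces the collected reward to be exactly $0$. Conversely, on any instance whose optimal budget is strictly positive, $\mathcal{A}$ necessarily returns a configuration with strictly positive collected reward, because the optimum lower bounds $\mathcal{A}$'s value. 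Hence the value returned by $\mathcal{A}$ is $0$ if and only if $G$ (with bias factor $\beta$) admits a motivating reward configuration within budget $b=0$.

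Next I would invoke Proposition~\ref{prop:3}: the output $r$ of $\mathcal{A}$ can be verified in polynomial time, so $\mathcal{A}$ together with this check is a polynomial-time decision procedure for MRC restricted to $b=0$. Since Theorem~\ref{th:comp2} shows that this restricted problem is NP-complete for every fixed $\beta\in(0,1)$, the existence of $\mathcal{A}$ would imply ${\rm P}={\rm NP}$. As $\alpha\geq1$ was arbitrary, no polynomial-time approximation within any ratio greater than or equal to $1$ is possible unless ${\rm P}={\rm NP}$.

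There is no genuine technical obstacle here; this is a promise-gap argument in which the gap is already present at the optimum value $0$. The only point deserving a sentence of care is that the reduction must rule out all ratios $\alpha\geq1$ uniformly rather than one particular value, which is automatic since the forced equality ``$\mathcal{A}$ returns $0$ at optimum $0$'' holds regardless of how large $\alpha$ is. I would close by remarking that this bound is the strongest one could hope for, as it excludes even an exact (ratio-$1$) algorithm.
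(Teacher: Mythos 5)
Your argument is correct and is essentially the paper's own: the inapproximability follows immediately from Theorem~\ref{th:comp2}, because NP-completeness of MRC with budget $b=0$ means any ratio-$\alpha$ approximation ($\alpha\geq1$) would be forced to return collected reward exactly $0$ on zero-optimum instances and strictly positive otherwise, thereby deciding the NP-complete problem. Your write-up just makes explicit (via Proposition~\ref{prop:3}) the polynomial-time evaluation of the returned configuration, which the paper leaves implicit.
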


\section{Conclusions}
In this paper we have studied computational problems in time-inconsistent planning using a graph model by Kleinberg and Oren~\cite{KO}. 
As a main result, assuming ${\rm P}\neq{\rm NP}$, we established asymptotically tight upper and lower bounds of $\Theta(\sqrt{n})$ on the efficient approximability of MSG-OPT as well as a negative approximability result for MRC-OPT.
Given the state of the art, we believe that a generalization of the graph model to quasi hyperbolic discount functions is a promising research direction. 
In hyperbolic discounting, which is frequently used in the behavioral economics literature~\cite{FLO,La}, there are two parameters $\beta,\delta\in [0,1]$.
Any value $c$ that is realized $t$ time steps in the future is perceived with a current value of $\beta \delta^t c$. 
For $t=0$, the perceived value is $c$.
Note that Kleinberg and Oren's~\cite{KO} model is a special case of quasi hyperbolic discounting for $\delta=1$.
Although such discount functions are more involved, the exponentially fading value of future costs and rewards might allow for improved approximation guarantees if $\delta<1$.

\section*{Appendix}

\begin{proposition}
$k$-DCP is NP-complete in directed acyclic graphs.
\end{proposition}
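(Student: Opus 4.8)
The plan is to adapt Lynch's NP-completeness proof for the undirected problem, reshaping the gadgets so that the resulting instance is a directed acyclic graph. Membership in NP is immediate: a system of $k$ vertex-disjoint directed paths, one joining each $s_i$ to its $t_i$, is a polynomial-size certificate whose validity (path property, correct endpoints, pairwise vertex-disjointness) can be checked in polynomial time. For hardness I would reduce from 3-SAT. Given a formula $\varphi$ over variables $x_1,\dots,x_n$ with clauses $C_1,\dots,C_m$, I build a DAG $H$ together with $k=n+m$ terminal pairs: one pair $(u_i,u_i')$ per variable $x_i$ and one pair $(w_j,w_j')$ per clause $C_j$.

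For every occurrence of a variable $x_i$ in a clause $C_j$ there is a single \emph{guard} vertex $g_{i,j}$. The variable gadget of $x_i$ admits exactly two directed paths from $u_i$ to $u_i'$: a \emph{true route} threading, in order of clause index, through the guards of all negative occurrences of $x_i$, and a \emph{false route} threading through the guards of all positive occurrences. Each route is an unbranched directed path, so a feasible routing is forced to choose one of the two, which commits $x_i$ to a truth value. The clause gadget of $C_j$ offers, for each of its (at most three) literals, one directed path of the form $w_j\to g_{i,j}\to w_j'$, where $x_i$ is the variable underlying that literal, and these are the only paths it contains joining $w_j$ to $w_j'$. Thus each $g_{i,j}$ lies on exactly one route of the $x_i$-gadget and on exactly one route of the $C_j$-gadget, so a vertex-disjoint routing can award it to at most one of the two. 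A short case check then shows that the $C_j$-path can be routed precisely when some literal of $C_j$ is made true by the committed assignment; consequently the feasible path systems of $H$ are in bijection with the satisfying assignments of $\varphi$, and the reduction is plainly polynomial.

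The delicate point --- and the step I expect to be the main obstacle --- is acyclicity: the guard vertices are shared by variable and clause routes, so I must exhibit a single global vertex ordering consistent with every route at once, while preserving the ``exactly two routes'' structure of the variable gadgets and the ``one route per literal'' structure of the clause gadgets. I would assign $g_{i,j}$ the coordinate $(i,j)$ and order all vertices lexicographically by these coordinates: along any variable route the visited guards all lie in one row and are met with increasing second coordinate, while along any clause route the single visited guard sits in a fixed column, so both kinds of route are monotone in this order. Placing $u_i$ just before and $u_i'$ just after the (consecutive) block of row-$i$ guards, and placing every $w_j$ before the whole grid and every $w_j'$ after it, gives a topological order; orienting each edge forward along it makes $H$ acyclic. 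What then remains is the routine, if slightly tedious, verification --- the real substance of adapting Lynch's proof to the acyclic setting --- that these extra forward edges introduce no unintended path from $u_i$ to $u_i'$ or from $w_j$ to $w_j'$, and that the only collisions among routes are the intended ones at the guard vertices.
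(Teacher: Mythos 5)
Your proposal is correct and is essentially the paper's own argument: your guard vertices $g_{i,j}$ are the paper's shared occurrence vertices $v_{i,j,k}$, your true/false routes are its high/low variable paths, your one-path-per-literal clause gadgets match, and acyclicity is obtained in both cases by ordering the shared vertices by variable and clause indices. The only difference is cosmetic (naming/indexing and an explicit global topological order), so no further comparison is needed.
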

\begin{proof}
Membership of $k$-DCP in NP is easy to see.
In order to show NP-hardness we modify Lynch's~\cite{L} reduction that mapped instances of 3-SAT to instances of $k$-DCP in undirected graphs.
Let $\mathcal{I}$ be a $3$-SAT instance with $m$ clauses $c_1 \ldots c_m$ over $n$ variables $x_1 \ldots x_n$.
Furthermore, let $\mathcal{J}$ be an $(m+n)$-DCP instance on a directed acyclic graph $H$ that is constructed in the following way.
For every variable $x_i$, there are terminals $s_i$ and $t_i$ that are connected via two vertex-disjoint paths.
One path, the so called high path, corresponds to the case that $x_i$ is set to true, while the other path, called low path, corresponds to the case that $x_i$ is false.
Similar to the variables, there are terminals $s'_j$ and $t'_j$ for each clause $c_j$.
Again, the terminals are connected by vertex-disjoints paths, this time one for each literal in $c_j$.
If the $k$-th literal of $c_j$ is equal to $x_i$, a vertex $v_{i,j,k}$ is added to the low path of $x_i$ and the literal's path.
Should the literal be negated, $v_{i,j,k}$ is added to the high path of $x_i$ and the literal's path instead.
Assuming that the position of the vertices $v_{i,j,k}$ along their respective paths is ordered according to the indices $i$ and $j$, one obtains a directed acyclic graph $H$.
Moreover, it is easy to verify that the size of $H$ is polynomial with respect to the original $3$-SAT formula.
Figure~\ref{fig:kdcp} shows an example of $H$ for a small sample instance of 3-SAT.
We next prove that $\mathcal{I}$ is feasible if and only if $\mathcal{J}$ is.

\begin{figure}[t]
\center	
\begin{tikzpicture}[nst/.style={draw,circle,fill=black,minimum size=4pt,inner sep=0pt]}, est/.style={draw,>=latex,->}]
			
				\node[nst] (v1) at (0,5.5) [label=left:\small{$s_1$}] {};
				\node[nst] (v2) at (0,3.5) [label=left:\small{$s_2$}] {};
				\node[nst] (v3) at (0,1.5) [label=left:\small{$s_3$}] {};
				\node[nst] (v4) at (1.5,7) [label=above:\small{$s'_1$}] {};
				\node[nst] (v5) at (4,7) [label=above:\small{$s'_2$}] {};
				\node[nst] (v6) at (7,7) [label=above:\small{$s'_3$}] {};
				\node[nst] (v7) at (9,5.5) [label=right:\small{$t_1$}] {};
				\node[nst] (v8) at (9,3.5) [label=right:\small{$t_2$}] {};
				\node[nst] (v9) at (9,1.5) [label=right:\small{$t_3$}] {};
				\node[nst] (v10) at (1.5,0) [label=below:\small{$t'_1$}] {};
				\node[nst] (v11) at (4,0) [label=below:\small{$t'_2$}] {};
				\node[nst] (v12) at (7,0) [label=below:\small{$t'_3$}] {};
				\node[nst] (v13) at (3,5) [label=above left:\small{$v_{1,2,1}$}] {};
				\node[nst] (v14) at (6,6) [label=above left:\small{$v_{1,3,1}$}] {};
				\node[nst] (v15) at (1,4) [label=above left:\small{$v_{2,1,1}$}] {};
				\node[nst] (v16) at (4,3) [label=above left:\small{$v_{2,2,2}$}] {};
				\node[nst] (v17) at (7,3) [label=above left:\small{$v_{2,3,2}$}] {};
				\node[nst] (v18) at (2,2) [label=above left:\small{$v_{3,1,2}$}] {};
				\node[nst] (v19) at (5,2) [label=above left:\small{$v_{3,2,3}$}] {};
				\node[nst] (v20) at (8,1) [label=above left:\small{$v_{3,3,3}$}] {};
				
				\path (v1) edge (0,6)
				(0,6) edge[est] (v14)
				(v14) edge (9,6)
				(9,6) edge[est] (v7)
				(v1) edge (0,5)
				(0,5) edge[est] (v13)
				(v13) edge (9,5)
				(9,5) edge[est] (v7)
				(v2) edge (0,4)
				(0,4) edge[est] (v15)
				(v15) edge (9,4)
				(9,4) edge[est] (v8)
				(v2) edge (0,3)
				(0,3) edge[est] (v16)
				(v16) edge[est] (v17)
				(v17) edge (9,3)
				(9,3) edge[est] (v8)
				(v3) edge (0,2)
				(0,2) edge[est] (v18)
				(v18) edge[est] (v19)
				(v19) edge (9,2)
				(9,2) edge[est] (v9)
				(v3) edge (0,1)
				(0,1) edge[est] (v20)
				(v20) edge (9,1)
				(9,1) edge[est] (v9)
				(v4) edge (1,7)
				(1,7) edge[est] (v15)
				(v15) edge (1,0)
				(1,0) edge[est] (v10)
				(v4) edge (2,7)
				(2,7) edge[est] (v18)
				(v18) edge (2,0)
				(2,0) edge[est] (v10)
				(v5) edge (3,7)
				(3,7) edge[est] (v13)
				(v13) edge (3,0)
				(3,0) edge[est] (v11)
				(v5) edge[est] (v16)
				(v16) edge[est] (v11)
				(v5) edge (5,7)
				(5,7) edge[est] (v19)
				(v19) edge (5,0)
				(5,0) edge[est] (v11)
				(v6) edge (6,7)
				(6,7) edge[est] (v14)
				(v14) edge (6,0)
				(6,0) edge[est] (v12)
				(v6) edge[est] (v17)
				(v17) edge[est] (v12)
				(v6) edge (8,7)
				(8,7) edge[est] (v20)
				(v20) edge (8,0)
				(8,0) edge[est] (v12);
				
\end{tikzpicture}
\caption{$H$ for $(\bar{x}_2\vee\bar{x}_3)\wedge(x_1\vee x_2 \vee \bar{x}_3)\wedge(\bar{x}_1\vee x_2 \vee x_3)$}
\label{fig:kdcp}
\end{figure}
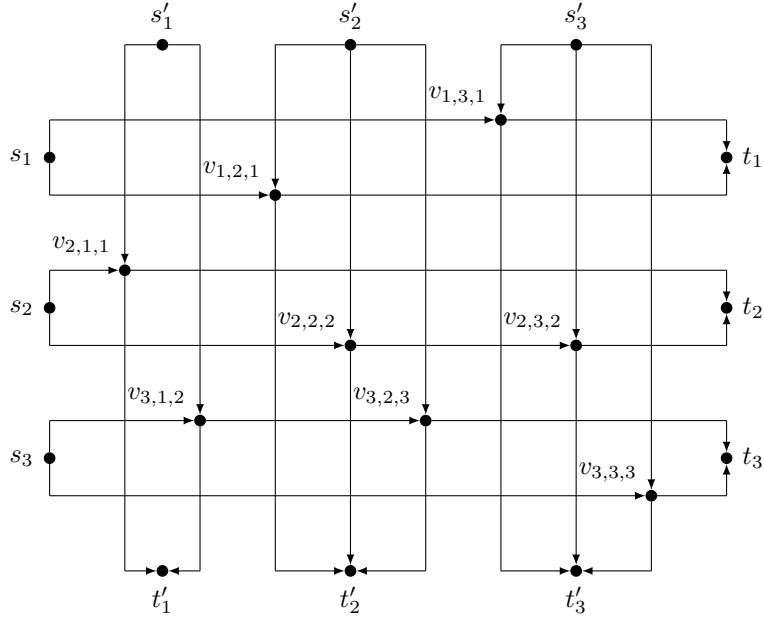

($\Longleftarrow$) Given a variable assignment that satisfies the $3$-SAT formula, it is easy to construct $m+n$ vertex-disjoint connecting path.
For all pairs $s_i$ and $t_i$, choose the high path if $x_i$ is true or the low path if $x_i$ is false.
For $s'_j$ and $t'_j$ the path of any literal in $c_j$ that evaluates to true can be selected.
Because the formula is satisfied, at least one such path must exist.
By construction of $H$, this yields $m+n$ vertex disjoint connecting path.
		
($\Longrightarrow$) Observe, that every literal's path has exactly one intermediate vertex.
As a result, the only paths that connect a terminal $s'_j$ with $t'_j$ are exactly the paths that correspond to the literals of $c_j$.
The same holds for the high and low paths of a variable $x_i$.
Hence, if there are $m+n$ vertex-disjoint connecting path in $H$, then the chosen high and low paths directly translate to a satisfying variable assignment of the $3$-SAT formula.
\end{proof}

\end{document}